\def\01{\{0,1\}}
\newcommand{\ceil}[1]{\lceil{#1}\rceil}
\newcommand{\eps}{\varepsilon}
\newcommand{\ket}[1]{|#1\rangle}
\newcommand{\inp}[2]{\langle{#1}|{#2}\rangle} 
\newcommand{\OR}{\mbox{\rm OR}}
\newcommand{\ED}{\mbox{\rm ED}}
\newcommand{\ADV}{\mbox{\rm ADV}}
\newcommand{\kSUM}{\mbox{\rm $k$-sum}}
\newcommand{\LGC}{\mbox{\rm LGC}}
\newcommand{\norm}[1]{\mbox{$\lVert{#1}\rVert$}}
\newtheorem{theorem}{Theorem}
\newtheorem{lemma}[theorem]{Lemma}
\newtheorem{fact}[theorem]{Fact}
\newtheorem{corollary}[theorem]{Corollary}
\newtheorem{definition}{Definition}
\newenvironment{proof}
{\noindent {\bf Proof. }}
{{\hfill $\Box$}\\
 \smallskip}
\begin{document}

\title{Optimal parallel quantum query algorithms\thanks{Partially supported by the French ANR Blanc project ANR-12-BS02-005 (RDAM), a Vidi grant from the Netherlands Organization for Scientific Research (NWO), ERC Consolidator grant QPROGRESS, the European Commission IST STREP projects Quantum Computer Science (QCS) 255961, Quantum Algorithms (QALGO) 600700, and the US ARO. An extended abstract of this paper appeared in the Proceedings of the 22nd European Symposium on Algorithms (ESA'14), pp.592-604.}}

\author[1]{Stacey Jeffery%
\thanks{\texttt{sjeffery@caltech.edu}}}
\author[2]{Frederic Magniez%
\thanks{\texttt{frederic.magniez@univ-paris-diderot.fr}}}
\author[3]{Ronald de Wolf%
\thanks{\texttt{rdewolf@cwi.nl}}}
\affil[1]{David R.\ Cheriton School of Computer Science and Institute for Quantum Computing, University of Waterloo, Canada}
\affil[2]{CNRS, LIAFA, Univ Paris Diderot, Sorbonne Paris-Cit\'e, 75205 Paris, France}
\affil[3]{CWI and University of Amsterdam, the Netherlands}

\date{}
\maketitle
\begin{abstract}
We study the complexity of quantum query algorithms that make $p$ queries in parallel in each timestep.  
This model is in part motivated by the fact that decoherence times of qubits are typically small, so it makes sense to parallelize quantum algorithms as much as possible. 
We show tight bounds for a number of problems, specifically $\Theta((n/p)^{2/3})$ $p$-parallel queries for element distinctness and $\Theta((n/p)^{k/(k+1)})$ for \kSUM. Our upper bounds are obtained by parallelized quantum walk algorithms, and our lower bounds are based on a relatively small modification of the adversary lower bound method, combined with recent results of Belovs et al.\ on learning graphs. We also prove some general bounds, in particular that quantum and classical $p$-parallel complexity are polynomially related for all total functions~$f$ when $p$ is small compared to $f$'s block sensitivity.
\end{abstract}
\thispagestyle{empty} 
\setcounter{page}{0} 
\newpage{}

\section{Introduction}

\subsection{Background}
Using quantum effects to speed up computation has been a prominent research-topic for the past two decades.
Most known quantum algorithms have been developed in the model of quantum query complexity,
 the quantum generalization of decision tree complexity.
Here an algorithm is charged for each ``query'' to the input,
while intermediate computation is free (see~\cite{buhrman&wolf:dectreesurvey} for more details). This model facilitates the proof of lower bounds, and often, though not always, quantum query upper bounds carry over to quantum time complexity. 
For certain functions one can obtain large quantum-speedups in this model.
For example, Grover's algorithm~\cite{grover:search} can search an $n$-bit database (looking for a bit-position of a~1) using $O(\sqrt{n})$ queries, while any classical algorithm needs $\Omega(n)$ queries.
For some partial functions we know exponential and even unbounded speed-ups~\cite{deutsch&jozsa,simon:power,shor:factoring,bcw:sharp,aaronson&ambainis:forrelation}.

A more recent crop of quantum speed-ups come from algorithms based on \emph{quantum walks}. Such algorithms solve a search problem by embedding the search on a graph, and doing a quantum walk on this graph that converges rapidly to a superposition over only the ``marked'' vertices, which are the ones containing a solution. An important example is Ambainis's quantum algorithm for solving the \emph{element distinctness} problem~\cite{ambainis:edj}.  In this problem one is given 
an input $x\in[q]^n$, and the goal is to find a pair of distinct $i$ and~$j$ in $[n]$ such that $x_i=x_j$, or report that none exists. Ambainis's quantum walk solves this in $O(n^{2/3})$ queries, which is optimal~\cite{aaronson&shi:collision}. Classically, $\Theta(n)$ queries are required.
Two generalizations of this are the \emph{$k$-distinctness} problem, where the objective is to find distinct $i_1,\ldots,i_k\in[n]$ such that $x_{i_1}=\cdots=x_{i_k}$, and the \emph{$k$-sum} problem, where the objective is to find distinct $i_1,\ldots,i_k\in[n]$ such that $x_{i_1}+\cdots+x_{i_k}=0\mod q$.  Ambainis's approach solves both problems using $O(n^{k/(k+1)})$ quantum queries.
Recently, Belovs gave a $o(n^{3/4})$-query algorithm for $k$-distinctness for any fixed~$k$~\cite{belovs:kdistinctness} (which can also be made time-efficient for $k=3$~\cite{bcjkm:timeefficient}). In contrast,  
Ambainis's $O(n^{k/(k+1)})$-query algorithm is optimal for $k$-sum~\cite{belovs:ksum,belovs&spalek:ksum}.

Here we consider to what extent such algorithms can be \emph{parallelized}. Doing operations in parallel is a well-known way to trade hardware for time, speeding up computations by distributing the work over many processors that run in parallel.  This is becoming ever more prominent in classical computing due to multi-core processors and grid computing.  In the case of quantum computing there is an additional reason to consider parallelization, namely the limited lifetime of qubits due to \emph{decoherence}: because of unintended interaction with their environment, qubits tend to lose their quantum properties over a limited amount of time, called the \emph{decoherence time}, and degrade to classical random bits. One way to fight this is to apply the recipes of 
quantum error-correction and fault-tolerance%
\footnote{Parallelism is in fact \emph{necessary} to do quantum error-correction against a constant noise rate: sequential operations cannot keep up with the parallel build-up of errors.}, which can counteract the effects of 
certain kinds of decoherence.  Another way is to try to parallelize as much as possible, completing the computation before the qubits decohere too much (this may of course increase the width of the computation, creating problems of its own).
\subsection{Earlier work on parallel quantum algorithms}
We know of only a few results about parallel quantum algorithms, most of them in the circuit model where ``time'' is measured by the depth of the circuit.  A particularly important and beautiful example is the work of Cleve and Watrous~\cite{cleve&watrous:parfourier}, who showed how to implement the $n$-qubit quantum Fourier transform using a quantum circuit of depth $O(\log n)$.  As a consequence, they were able to parallelize the quantum component of Shor's algorithm: they showed that one can factor $n$-bit integers by means of an $O(\log n)$-depth quantum circuit with polynomial-time classical pre- and post-processing. 
There have also been a number of papers about quantum versions of small-depth classical Boolean circuit classes like AC and NC~\cite{moore&nilsson:parallelqc,ghmp:qacc,hoyer&spalek:fanout,takahashi&tani:collapse}. 
Beals et al.~\cite{bbghklss:distributedq} show how the quantum circuit model can be efficiently simulated by the more realistic model of a distributed quantum computer (see also~\cite{grover&rudolph}).
The setting of \emph{measurement-based} quantum computing (see~\cite{jozsa:measbased} and references therein) in some cases allows more parallelization than the usual circuit model. 

Another example, basically the only one we know of in the setting of query complexity, is Zalka's tight analysis of parallelizing quantum search~\cite[Section~4]{zalka:grover}.  Suppose one wants to search an $n$-bit database, with the ability to do $p$ queries in parallel in one time-step.  An easy way to make use of this parallelism is to view the database as $p$ databases of $n/p$ bits each, and to run a separate copy of Grover's algorithm on each of those.  This finds a 1-position with high probability using $O(\sqrt{n/p})$ $p$-parallel queries, and Zalka showed that this is optimal.
Subsequently Grover and Radhakrishnan~\cite{grover&radhakrishnan:parallel} studied the problem of finding \emph{all} $k$ solutions in an $n$-bit database using $p$-parallel queries. They assume $k,p\leq\sqrt{n}$ and show that $\widetilde{\Theta}(\sqrt{nk/p\min\{k,p\}})$ $p$-parallel queries are necessary and sufficient.

\subsection{Our results}
We focus on parallel quantum algorithms in the setting of quantum query complexity. 
Consider a function $f:{\cal D}\rightarrow\01$, with ${\cal D}\subseteq[q]^n$.
For standard (sequential) query complexity, let $Q(f)$ denote the bounded-error quantum query complexity of $f$, i.e., the minimal number of queries needed among all quantum algorithms that (for every input $x\in{\cal D}$) output $f(x)$ with probability at least $2/3$.  In the $p$-parallel query model, for some integer $p\geq 1$, an algorithm can make up to $p$ quantum queries in parallel in each timestep.
In that case, we let $Q^{p\parallel}(f)$ denote the bounded-error $p$-parallel complexity of~$f$. As always in query complexity, all intermediate input-independent computation is free. For every function, we have $Q(f)/p\leq Q^{p\parallel}(f)\leq Q(f)$.

An extreme case of the parallel model is where $p$ large enough so that $Q^{p\parallel}(f)$ becomes~1; such algorithms are called ``nonadaptive,'' because 
all queries are made in parallel. Montanaro~\cite{montanaro:nonadaptiveq} showed that for total functions, such nonadaptive quantum algorithms cannot improve much over classical algorithms: every Boolean function that depends on $n$ input bits needs $p\geq n/2$ nonadaptive quantum queries for exact computation, and $p=\Omega(n)$ queries for bounded-error computation. 

Here we prove matching upper and lower bounds on the $p$-parallel complexity $Q^{p\parallel}(f)$ for a number of 
 problems: $\Theta((n/p)^{2/3})$ queries for element distinctness and $\Theta((n/p)^{k/(k+1)})$ for the \kSUM~problem for any constant $k>1$. 
Our upper bounds are obtained by parallelized quantum walk algorithms, and our lower bounds are based on a modification of the adversary lower bound method combined with some recent results by Belovs et al.\ about using so-called ``learning graphs,'' both for upper and for lower bounds~\cite{belovs:learninggraphs,belovs:nonadaptive,belovs:ksum,belovs&spalek:ksum}. The modification we need to make is surprisingly small, and technically we need to do little more than adapt recent progress on sequential algorithms to the parallel case.  Still, we feel this extension is important because: (1) 
our techniques may be useful for proving future lower bounds; (2) parallel quantum algorithms are important and yet have received little attention before; and (3) the fact that the extension is easy and natural increases our confidence that the adversary method  
is the ``right'' approach in the parallel as well as the sequential case. 

In Section~\ref{secgeneralbounds} we prove some more ``structural'' results, i.e., bounds for $Q^{p\parallel}(f)$ that hold for all total Boolean functions $f:\01^n\rightarrow\01$.  Specifically, based on earlier results in the sequential model due to Beals et al.~\cite{bbcmw:polynomialsj}, we show that if $p$ is not too large then $Q^{p\parallel}(f)$ is polynomially related to its classical deterministic $p$-parallel counterpart.
We also observe that $Q^{p\parallel}(f)\approx n/2p$ for almost all~$f$.

\section{Preliminaries}

\subsection{Sequential and parallel query complexity}
We use $[n]:=\{1,\ldots,n\}$, ${[n]\choose k}:=\{S\subseteq [n] : |S|=k\}$,
${[n]\choose \leq k}:=\{S\subseteq [n] : |S|\leq k\}$, 
and ${n\choose \leq k}:=
\sum_{s=0}^k{n\choose s}$.

We assume basic familiarity with the model of sequential quantum query algorithms~\cite{buhrman&wolf:dectreesurvey}. We will consider algorithms in the $p$-parallel quantum query model. A quantum query to an input $x\in[q]^n$ corresponds to the following unitary map on two quantum registers:
$$
\ket{i,b}\mapsto\ket{i,b+x_i}.
$$
Here the first $n$-dimensional register contains the index $i\in[n]$ of the queried element, and the value of that element is added (in $\mathbb{Z}_q$) to the contents of the second ($q$-dimensional) register. 
In order to enable an algorithm to not make a query on part of its state, we extend the previous unitary map to the case $i=0$ by $\ket{0,b}\mapsto\ket{0,b}.$
In each timestep we can make up to $p$ quantum queries in parallel by applying the map
$$
\ket{i_1,b_1,\dots,i_p,b_p}\mapsto\ket{i_1,b_1+x_{i_1},\dots,i_p,b_p+x_{i_p}}
$$ 
at unit cost. All intermediate input-independent computation is free, so the complexity of a $p$-parallel algorithm is measured solely by the number of times it applies a $p$-parallel query.

Consider a function $f:{\cal D}\rightarrow\01$, with ${\cal D}\subseteq[q]^n$.
When $p=1$ we have the standard sequential query complexity, and we let $Q_\eps(f)$ denote the quantum query complexity of $f$ with error probability~$\leq\eps$ on every input $x\in{\cal D}$. For general $p$, let $Q_\eps^{p\parallel}(f)$ be the $p$-parallel complexity of~$f$. Note that $Q_\eps(f)/p\leq Q_\eps^{p\parallel}(f)\leq Q_\eps(f)$ for every function. The exact value of the error probability $\eps$ does not matter, as long as it is a constant $<1/2$.  We usually fix $\eps=1/3$, abbreviating $Q(f)=Q_{1/3}(f)$ and $Q^{p\parallel}(f)=Q_{1/3}^{p\parallel}(f)$ as in the introduction.

We will use an extension of the adversary bound for the usual sequential (1-parallel) quantum query model. 
An \emph{adversary matrix} $\Gamma$ for~$f$ is a real-valued matrix whose rows are indexed by inputs $x\in f^{-1}(0)$ 
and whose columns are indexed by $y\in f^{-1}(1)$.\footnote{One also often sees this defined as a matrix whose rows and columns are both indexed by the set of all inputs, and that is required to be~0 on $x,y$-entries where $f(x)=f(y)$. Both definitions of an adversary matrix give the same lower bound.}
Let $\Delta_j$ be the Boolean matrix whose rows and columns are indexed by $x\in f^{-1}(0)$ and $y\in f^{-1}(1)$, such that $\Delta_j[x,y]=1$ if $x_j\neq y_j$, and $\Delta_j[x,y]=0$ otherwise.
The (negative-weights) adversary bound for~$f$ is given by: 
\begin{equation}\label{eq:advdef}
\ADV(f)=\max_\Gamma\frac{\norm{\Gamma}}{\max_{j\in[n]}\norm{\Gamma\circ \Delta_j}},
\end{equation}
where $\Gamma$ ranges over all adversary matrices for~$f$, `$\circ$' denotes entry-wise product of two matrices,
and `$\norm{\cdot}$' denotes the operator norm associated to the $\ell_2$ norm.
This lower bound (often denoted $\ADV^\pm(f)$ instead of $\ADV(f)$) was introduced by H\o yer et al.~\cite{hls:madv}, generalizing Ambainis~\cite{ambainis:lowerboundsj}. 
They showed
$$
Q_\eps(f)\geq \frac{1}{2}(1-\sqrt{\eps(1-\eps)})\ADV(f)
$$
for all~$f$.
Reichardt et al.~\cite{reichardt:tight,lmrss:stateconv} showed this is tight: $Q(f)=\Theta(\ADV(f))$ for all (total as well as partial) Boolean functions~$f$.

\subsection{Quantum walks}
We will construct and analyze our algorithms in the quantum walk framework of \cite{mnrs11}, which we now briefly describe. Given a reversible Markov process $P$ on state space $V$, and a subset $M\subset V$ of marked elements, we define three costs: the setup cost, $\mathsf{S}$, is the cost to construct a superposition $\sum_{v\in V}\sqrt{\pi_v}\ket{v}$, where $\pi_v$ is the probability of vertex~$v$ in the stationary distribution~$\pi$ of $P$; the checking cost, $\mathsf{C}$, is the cost to check if a state $v\in V$ is in $M$; and the update cost, $\mathsf{U}$, is the cost to perform the map $\ket{v}\ket{0}\mapsto \ket{v}\sum_{u\in V}\sqrt{P_{vu}}\ket{u}$, where $P_{vu}$ is the transition probability in~$P$ to go from $v$ to~$u$. Let $\delta$ be the spectral gap of~$P$, which is the difference between its largest and second-largest eigenvalue.  Let $\eps$ be a lower bound on $\sum_{v\in M}\pi_v$ whenever $M$ is nonempty.  Then we can determine if $M$ is nonempty with bounded error probability in cost 
$$
O\left(\mathsf{S}+\frac{1}{\sqrt{\eps}}\left(\frac{1}{\sqrt{\delta}}\mathsf{U}+\mathsf{C}\right)\right).
$$
If $\mathsf{S}$, $\mathsf{U}$ and $\mathsf{C}$ denote query complexities, then the above expression gives the bounded-error query complexity of the quantum walk algorithm. If they denote \emph{$p$-parallel} query complexities, the above expression gives the bounded-error $p$-parallel complexity.

\section{Lower bounds for parallel quantum query complexity}
\subsection{Adversary bound for parallel algorithms}\label{ssec:paralleladversary}

We start by extending the adversary bound for the usual sequential quantum query algorithms to $p$-parallel algorithms. For $J\subseteq[n]$, let $x_J$ be the string~$x$ restricted to the entries in~$J$. Let $\Delta_J$ be the Boolean matrix whose rows are indexed by $x\in f^{-1}(0)$ and whose columns are indexed by $y\in f^{-1}(1)$, and that has a~$1$ at position $(x,y)$ iff $x_J\neq y_J$ (i.e., $x_j\neq y_j$ for at least one $j\in J$). For $J=\emptyset$, $\Delta_J$ is the all-0 matrix. Define the following quantity:
\begin{equation}\label{eq:paralleladvdef}
\ADV^{p\parallel}(f)=\max_\Gamma\frac{\norm{\Gamma}}{\max_{J\in{[n]\choose \leq p}}\norm{\Gamma\circ \Delta_J}}.
\end{equation}
The following fact (Appendix~\ref{fact1}) implies we only need to consider sets $J\in{[n]\choose p}$ in the above definition:
$\ADV^{p\parallel}(f)$ equals 
$$
\max_\Gamma\frac{\norm{\Gamma}}{\max_{J\in{[n]\choose p}}\norm{\Gamma\circ \Delta_J}}
$$ 
up to a factor of~2. We could even use the latter as an alternative definition of $\ADV^{p\parallel}(f)$.

\begin{fact}\label{optp}
For every set $J\subseteq K\subseteq[n]$, we have
$\norm{\Gamma\circ \Delta_J}\leq 2\norm{\Gamma\circ \Delta_{K}}$.
\end{fact}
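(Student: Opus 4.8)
The plan is to split the matrix $\Gamma\circ\Delta_K$ according to whether the two inputs agree on the \emph{smaller} index set $J$. To this end I would introduce the Boolean ``equality'' matrix $E_J$, with the same row and column index sets as the $\Delta$'s, defined by $E_J[x,y]=1$ iff $x_J=y_J$ (and $0$ otherwise). Because $J\subseteq K$, whenever $x_J\neq y_J$ we automatically have $x_K\neq y_K$, so on the entries with $x_J\neq y_J$ both $\Delta_J$ and $\Delta_K$ equal $1$; on the complementary entries, where $x_J=y_J$, the matrix $\Delta_J$ is $0$. Hence the support of $\Gamma\circ\Delta_K$ decomposes into the $x_J\neq y_J$ part, which is exactly $\Gamma\circ\Delta_J$, and the $x_J=y_J$ part, which is exactly $(\Gamma\circ\Delta_K)\circ E_J$. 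This yields the clean identity
\begin{equation*}
\Gamma\circ\Delta_J=\Gamma\circ\Delta_K-(\Gamma\circ\Delta_K)\circ E_J .
\end{equation*}

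The crucial step is then to show that entrywise-multiplying any matrix $A$ (with these index sets) by $E_J$ cannot increase its operator norm, i.e.\ $\norm{A\circ E_J}\leq\norm{A}$. The point is that $E_J$ is the indicator of a disjoint union of combinatorial rectangles: grouping the rows by the value of $x_J$ and the columns by the value of $y_J$, one sees that $A\circ E_J$ has nonzero entries only in the diagonal blocks where the two values coincide. Thus $A\circ E_J$ is, up to reordering rows and columns, block-diagonal, with blocks indexed by the possible values $v$ of the restriction to $J$, the $v$-th block being the submatrix of $A$ on rows with $x_J=v$ and columns with $y_J=v$. Since the operator norm of a block-diagonal matrix is the maximum of the norms of its blocks, and each block is a submatrix of $A$ whose norm is at most $\norm{A}$ (deleting rows and columns never increases the operator norm), we obtain $\norm{A\circ E_J}\leq\norm{A}$.

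Finally, I would apply this with $A=\Gamma\circ\Delta_K$ and invoke the triangle inequality on the identity above:
\begin{equation*}
\norm{\Gamma\circ\Delta_J}\leq\norm{\Gamma\circ\Delta_K}+\norm{(\Gamma\circ\Delta_K)\circ E_J}\leq 2\norm{\Gamma\circ\Delta_K},
\end{equation*}
which is the desired bound (the case $J=\emptyset$ being trivial, as $\Delta_\emptyset$ is the all-zero matrix). I expect the only genuinely nontrivial ingredient to be the norm fact $\norm{A\circ E_J}\leq\norm{A}$: the rest is bookkeeping about supports, but this inequality is where the special ``equality-pattern'' structure of $E_J$ is essential, since masking by an \emph{arbitrary} Boolean matrix can increase the operator norm. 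The factor $2$ comes entirely from the triangle inequality and is what one should expect to be unavoidable with this approach.
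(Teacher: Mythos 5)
Your proof is correct, and while it shares the paper's skeleton, the key lemma is established by a genuinely different and more elementary argument. Both proofs exploit the same two structural facts: that the support of $\Delta_J$ is contained in that of $\Delta_K$ (your identity $\Gamma\circ\Delta_J=\Gamma\circ\Delta_K-(\Gamma\circ\Delta_K)\circ E_J$ is just the paper's observation $\Gamma\circ\Delta_J=(\Gamma\circ\Delta_K)\circ\Delta_J$ with $\Delta_J$ expanded as all-ones minus $E_J$), and that the equality pattern $E_J$ is, after permuting rows and columns, an identity tensored with an all-ones block. The difference is how the norm bound is extracted from this structure. The paper works with the $\gamma_2$-norm: it bounds $\gamma_2(\Delta_J)\leq 2$ via the triangle inequality (using $\gamma_2$ of the all-ones and identity matrices being $1$, and multiplicativity under tensor products), and then invokes the general Hadamard-product inequality $\norm{A\circ B}\leq\norm{A}\,\gamma_2(B)$ with $A=\Gamma\circ\Delta_K$. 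You instead prove from first principles the one special case you need, $\norm{A\circ E_J}\leq\norm{A}$: grouping rows by $x_J$ and columns by $y_J$ makes $A\circ E_J$ block-diagonal up to reordering, the norm of a block-diagonal matrix is the maximum of its block norms, and each block is a submatrix of $A$. In effect your block-diagonal argument is a direct proof that $\gamma_2(E_J)=1$, so your route is self-contained (no appeal to the cited $\gamma_2$ machinery) but less modular; the paper's route yields a reusable general principle for arbitrary masks, whereas yours correctly isolates that the special equality-pattern structure of $E_J$ is essential, since an arbitrary Boolean mask can increase the operator norm. In both proofs the factor $2$ comes from the same triangle inequality, applied at the level of $\gamma_2$ in the paper and at the level of operator norms in yours.
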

We now show that, just like in the sequential case, the adversary bound characterizes the quantum query complexity also in the $p$-parallel case:

\begin{theorem}\label{thm:par-adv}
For every $f:{\cal D}\rightarrow\01$ and ${\cal D}\subseteq[q]^n$, $\displaystyle Q^{p\parallel}(f) = \Theta(\ADV^{p\parallel}(f)).$
\end{theorem}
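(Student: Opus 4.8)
The plan is to reduce the $p$-parallel query model to the ordinary sequential (1-parallel) model over a ``batched'' oracle, and then invoke the known tight characterization of $Q$ by $\ADV$ due to Reichardt et al. Concretely, for an input $x\in[q]^n$, define a new input $\widetilde{x}$ whose positions are the $p$-tuples $J=(i_1,\dots,i_p)\in(\{0,1,\dots,n\})^p$ and whose value at position $J$ is $\widetilde{x}_J=(x_{i_1},\dots,x_{i_p})$, with the convention that $x_0$ is a fixed constant. Put $\widetilde{f}(\widetilde{x}):=f(x)$; this is well defined since $\widetilde{x}$ determines $x$. The key observation is that the $p$-parallel query to $x$, namely $\ket{i_1,b_1,\dots,i_p,b_p}\mapsto\ket{i_1,b_1+x_{i_1},\dots,i_p,b_p+x_{i_p}}$, is literally a single standard quantum query to the oracle $\widetilde{x}$ (identifying the registers $(i_1,b_1,\dots,i_p,b_p)$ with $(J,B)$ and reading $B+\widetilde{x}_J$ componentwise in $\mathbb{Z}_q$). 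Since the free input-independent unitaries available in the two models are identical, every $p$-parallel algorithm for $f$ is a sequential algorithm for $\widetilde{f}$ with the same number of queries and the same error, and vice versa. Hence $Q^{p\parallel}_\eps(f)=Q_\eps(\widetilde{f})$ exactly.

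Next I would apply the characterization $Q(\widetilde{f})=\Theta(\ADV(\widetilde{f}))$ of Reichardt et al., which holds for arbitrary partial, large-alphabet, Boolean-valued functions and therefore applies to $\widetilde{f}:\widetilde{\mathcal D}\to\01$. This immediately yields both directions of the theorem, \emph{provided} I can identify the ordinary adversary bound of the batched function with the parallel adversary bound of the original; that is, I claim $\ADV(\widetilde{f})=\ADV^{p\parallel}(f)$.

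To prove this identity I would note first that adversary matrices for $\widetilde{f}$ and for $f$ are the same object, since rows and columns are indexed by $\widetilde{f}^{-1}(0)=f^{-1}(0)$ and $\widetilde{f}^{-1}(1)=f^{-1}(1)$ under the bijection $x\leftrightarrow\widetilde{x}$. Second, the per-position difference matrix for the batched oracle at position $J=(i_1,\dots,i_p)$ has a $1$ at $(x,y)$ exactly when $\widetilde{x}_J\neq\widetilde{y}_J$, i.e. when $x_i\neq y_i$ for some $i\in\{i_1,\dots,i_p\}$; this is precisely the matrix $\Delta_S$ of~\eqref{eq:paralleladvdef} for the underlying set $S=\{i_1,\dots,i_p\}\setminus\{0\}$, whose size is at most $p$. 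As $J$ ranges over all $p$-tuples, the underlying set $S$ ranges over all of ${[n]\choose \le p}$ (repeated entries and the no-query symbol $0$ only shrink $S$), so $\max_J\norm{\Gamma\circ\widetilde{\Delta}_J}=\max_{S\in{[n]\choose \le p}}\norm{\Gamma\circ\Delta_S}$. Substituting this into the definition of $\ADV(\widetilde{f})$ reproduces~\eqref{eq:paralleladvdef} exactly, giving $\ADV(\widetilde{f})=\ADV^{p\parallel}(f)$ and hence $Q^{p\parallel}(f)=\Theta(\ADV^{p\parallel}(f))$.

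The conceptual core --- that a parallel query is a single query to a batched input --- is elementary, so I do not expect a genuine obstacle in the main line of argument; the deep ingredient is Reichardt et al.'s sequential tightness theorem, which I use as a black box. The only place demanding care is the bookkeeping in the final step: verifying that the batched difference matrices coincide with the $\Delta_J$ of the parallel definition and that the two maximization domains align, including the correct handling of the symbol $0$ and of repeated indices so that exactly the sets in ${[n]\choose \le p}$ arise. This is precisely the ``surprisingly small modification'' advertised in the introduction, and I would present the identity $\ADV(\widetilde{f})=\ADV^{p\parallel}(f)$ as a short lemma feeding into the cited characterization.
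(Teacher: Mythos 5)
Your proposal is correct and matches the paper's own proof essentially step for step: the paper likewise defines a batched function $F$ on inputs $X$ indexed by $J\in{[n]\choose\leq p}$ with $X_J=(x_j)_{j\in J}$, observes $Q^{p\parallel}(f)=Q(F)$, invokes Reichardt et al.'s tightness $Q(F)=\Theta(\ADV(F))$, and identifies $\ADV(F)$ with $\ADV^{p\parallel}(f)$ via the coincidence of the difference matrices. Your only deviation --- indexing by $p$-tuples with repetitions and the null symbol $0$ rather than by sets of size at most $p$ --- is cosmetic (it makes the one-query-equals-one-parallel-query simulation literal, while the underlying sets still range over exactly ${[n]\choose\leq p}$), and your explicit lemma $\ADV(\widetilde{f})=\ADV^{p\parallel}(f)$ is just a slightly more careful statement of what the paper's last sentence asserts.
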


\begin{proof}
In order to derive $p$-parallel lower bounds from sequential lower bounds, observe that we can make a bijection between input $x\in [q]^n$ and a larger string $X$ indexed by all sets $J\in{[n]\choose \leq p}$, such that $X_{J}=(x_{j})_{j\in J}$.  
That is, each index $J$ of~$X$ corresponds to up to~$p$ indices~$j$ of~$x$.
We now define a new function $F:{\cal D}'\rightarrow\01$,
where ${\cal D}'$ is the set of $X$ as above, in 1-to-1 correspondence with the elements of $x\in\cal D$, and $F(X)$ is defined as~$f(x)$. %
\footnote{Note that for $p>1$ the new function $F$ is partial, even if the underlying $f$ is total. A similar translation from parallel to sequential complexity was used by Grover and Radhakrishnan~\cite[Section~2]{grover&radhakrishnan:parallel} for the special case of searching a database.}
One query to $X$ can be simulated by $p$ parallel queries to $x$, and vice versa, so we have
$Q^{p\parallel}(f)=Q(F)$.
As mentioned at the end of Section~\ref{ssec:paralleladversary}, we have $Q(F)=\Theta(\ADV(F))$. Now Eq.~\eqref{eq:advdef} applied to~$F$ gives the claimed lower bound of Eq.~\eqref{eq:paralleladvdef} on $Q^{p\parallel}(f)$.
\end{proof}

Sometimes we can even use the same adversary matrix $\Gamma$ to obtain optimal lower bounds for~$F$ and~$f$. 
An example is the $n$-bit OR-function. Let $\Gamma$ be the all-ones $1\times n$ matrix, with the row corresponding to input~$0^n$ and the columns indexed by all weight-1 inputs. Then $\norm{\Gamma}=\sqrt{n}$ and $\norm{\Gamma\circ\Delta_j}=1$ for all $j\in[n]$, and hence $Q(\OR)=
\Omega(\sqrt{n})$. To get $p$-parallel lower bounds, we define a new function $F:X\mapsto\01$ as in the proof of Theorem~\ref{thm:par-adv}. We can use the same~$\Gamma$, with the $n$ columns still indexed by the weight-1 inputs to $f$ (which induce 1-inputs to $F$). Now $J$ ranges over subsets of $[n]$ of size at most $p$, and $\Delta_J$ will be the matrix whose $(x,y)$-entry is~1 if there is at least one $j\in J$ such that $x_j\neq y_j$. Note that $\norm{\Gamma\circ\Delta_J}=\sqrt{|J|}$ for all $J$.
Hence $Q^{p\parallel}(\OR)=\Omega(\ADV(F))=\Omega(\sqrt{n/p})$. This is optimal and was already proved (in a different way) by Zalka~\cite[Section~4]{zalka:grover}.

\subsection{Belovs's learning graph approach}\label{sec:belovs}

Recently Belovs~\cite{belovs:learninggraphs} gave a new approach to designing (sequential) quantum algorithms, via the optimality of the adversary method. He introduced the model of \emph{learning graphs} to prove upper bounds on the adversary bound, and hence upper bounds on quantum query complexity.  
We state it here for \emph{certificate structures}.
We define these below, slightly simpler and less general than Definitions~1 and~3 of Belovs and Rosmanis~\cite{belovs:nonadaptive} (for us $M$ denotes a minimal certificate, while in~\cite{belovs:nonadaptive} it denotes the set of supersets of a minimal certificate).

\begin{definition}
Let $\cal C$ be a set of incomparable subsets of~$[n]$.
We say $\cal C$ is a \emph{1-certificate structure} for a function $f:{\cal D}\rightarrow\01$, with ${\cal D}\subseteq[q]^n$,  
if  for every $x\in f^{-1}(1)$ there exists an $M\in{\cal C}$ such that for all $y\in{\cal D}$, $y_M=x_M$ implies $f(y)=1$.
We say $\cal C$ is \emph{$k$-bounded} if $|M|\leq k$ for all $M\in{\cal C}$.
\end{definition}

The 
learning graph complexity of $\cal C$ is defined in the following  
in its primal formulation as a minimization problem (we will see an equivalent dual formulation soon).
Let ${\cal E}=\{(S,j): S\subseteq[n], j\in[n]\backslash S\}$. For $e=(S,j)\in {\cal E}$, we use $s(e)=S$ and $t(e)=S\cup\{j\}$.
\begin{align}
&\LGC({\cal C})= \min \sqrt{\textstyle\sum_{e\in{\cal E}} w_e} \quad\mbox{such that}\label{fig:lg}\\
 &\quad \sum_{e\in{\cal E}}\frac{\theta_e(M)^2}{w_e}\leq 1 & \mbox{for all }M\in{\cal C}\label{eq:energy}\\
&\quad \sum_{e\in{\cal E}: t(e)=S} \theta_e(M)=\sum_{e\in{\cal E}: s(e)=S} \theta_e(M) & \mbox{for all } M\in{\cal C},\emptyset\neq S\subseteq[n],M\not\subseteq S \label{eq:flow}\\
&\quad \sum_{e\in{\cal E}: s(e)=\emptyset} \theta_e(M)=1 & \mbox{for all }M\in{\cal C} \label{eq:unit-flow}\\
&\quad \theta_e(M)\in\mathbb{R}, w_e\geq 0 & \mbox{for all }e\in{\cal E}\mbox{ and }M\in{\cal C}\label{eq:flow2}
\end{align}
For each $M$,  $\theta_e(M)$ is a \emph{flow} from $\emptyset$ to $M$ on the graph with vertices $\{S\subseteq [n]\}$ and edges $\{\{S,S\cup\{j\}\}:(S,j)\in \cal{E}\}$ if $\theta_e(M)$ satisfies condition~\eqref{eq:flow}.
Moreover, $\theta_e(M)$ is a \emph{unit flow} if it also satisfies condition~\eqref{eq:unit-flow}.

Belovs showed that the learning graph complexity of $\cal C$ is an upper bound on $\ADV(f)$, and hence on $Q(f)$, for any function~$f$ with certificate structure $\cal C$. This bound is not always optimal, since it only depends on the certificate structure of $f$. For example, $k$-distinctness has
quantum query complexity $o(n^{3/4})$ 
even though it has the same 1-certificate structure as $k$-sum, whose quantum query complexity is $\Theta(n^{k/(k+1)})$~\cite{belovs:ksum,belovs&spalek:ksum}.
However, Belovs and Rosmanis~\cite{belovs:nonadaptive} proved that for a special class of functions, 
it turns out the upper bound LGC(${\cal C})$ is optimal.
\begin{definition}
{An \emph{orthogonal array} of length $k$ is a set $T\subseteq[q]^k$, such that for every $i\in[k]$ and every $x_1,\ldots,x_{i-1},x_{i+1},\ldots,x_k$ there exists exactly one $x_i\in[q]$ such that $(x_1,\ldots,x_k)\in T$.}
\end{definition}

\begin{theorem}[Belovs-Rosmanis]\label{thq=lgc}
Let $\cal C$ be a $k$-bounded 1-certificate structure for some constant $k$, $q\geq 2|{\cal C}|$, and let each $M\in{\cal C}$ be equipped with an orthogonal array $T_M$ of length $|M|$. Define a Boolean function $f:[q]^n\rightarrow\01$ by: $f(x)=1$ iff there exists an $M\in{\cal C}$ such that $x_M\in T_M$. Then $Q(f)=\Theta(\LGC({\cal C}))$.
\end{theorem}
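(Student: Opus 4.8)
The plan is to prove the two inequalities separately. The upper bound $Q(f) = O(\LGC({\cal C}))$ is essentially free: since $\cal C$ is a 1-certificate structure for $f$, Belovs's result quoted above gives $\ADV(f) \le \LGC({\cal C})$, and combining this with the tightness theorem $Q(f) = \Theta(\ADV(f))$ of Reichardt et al.\ yields $Q(f) = O(\LGC({\cal C}))$. So the whole content is the matching lower bound $Q(f) = \Omega(\LGC({\cal C}))$. Again invoking $Q(f) = \Theta(\ADV(f))$, it suffices to exhibit an adversary matrix $\Gamma$ for $f$ whose value $\norm{\Gamma}/\max_j \norm{\Gamma \circ \Delta_j}$ is $\Omega(\LGC({\cal C}))$.

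To produce such a $\Gamma$ from the learning-graph data, I would first pass to the dual formulation of $\LGC({\cal C})$ foreshadowed after \eqref{fig:lg}. The primal \eqref{fig:lg}--\eqref{eq:flow2} is a minimization over edge-weights $w_e$ and per-certificate unit flows $\theta_e(M)$, where the constraint \eqref{eq:energy} is exactly an effective-resistance bound (conductances $w_e$, flows $\theta_e(M)$). By convex/LP duality, equivalently by the Thomson--Dirichlet variational principle for effective resistance, its optimum equals $\LGC({\cal C})$ and is witnessed by a family of vertex potentials $\{\phi_M\}_{M \in {\cal C}}$ on the subset-lattice $\{S \subseteq [n]\}$. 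These potentials record, for each certificate $M$, how far a query set $S$ has progressed toward revealing $M$, and they will supply the scalar coefficients in $\Gamma$.

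The engine that turns this combinatorial, certificate-level data into a genuine spectral object is the orthogonal-array structure, exploited through Fourier analysis on $\mathbb{Z}_q^n$. The defining property of an orthogonal array $T_M$ of length $|M|$ is that fixing all but one coordinate leaves exactly one completion in $T_M$; equivalently, the Fourier transform of the indicator $1[x_M \in T_M]$ is supported only on characters $\alpha$ whose restriction to $M$ is either identically zero or nowhere zero (``block-supported on $M$''). I would build $\Gamma$ as a sum, over certificates $M$ and over these block-supported characters, of rank-structured pieces weighted by the dual potentials $\phi_M$, arranged so that each piece has large operator norm along the direction separating $0$-inputs from the $1$-inputs certified by $M$. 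The block-support property is precisely what makes the query damping $\Gamma \mapsto \Gamma \circ \Delta_j$ affect only the pieces with $j \in M$, so that the denominator $\max_j \norm{\Gamma \circ \Delta_j}$ reproduces the edge-weight budget $\sum_e w_e$ and the ratio matches $\LGC({\cal C})$.

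The hard part will be controlling the interaction between \emph{distinct} certificates in the two norm estimates: lower-bounding $\norm{\Gamma}$ and upper-bounding each $\norm{\Gamma \circ \Delta_j}$ both require that cross-terms between $M, M' \in {\cal C}$ not corrupt the per-certificate bounds. This is exactly where the hypotheses $q \ge 2|{\cal C}|$ and constant $k$ enter: the large alphabet renders the $1$-input sets certified by distinct certificates almost disjoint and the block-supported characters of distinct $M$ nearly orthogonal, so the off-diagonal contributions are lower order and both spectral norms decouple into the per-certificate quantities that the dual LP already balances. Verifying this decoupling quantitatively — confirming that the constructed $\Gamma$ is a legitimate adversary matrix and that both norms behave as the dual optimum predicts — is the technical heart of the argument; once it is in place we obtain $\ADV(f) = \Omega(\LGC({\cal C}))$, and the theorem follows.
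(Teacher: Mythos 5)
Your upper-bound half is correct and is exactly how the paper proceeds: $\ADV(f)\leq\LGC({\cal C})$ from Belovs's learning-graph result, combined with $Q(f)=\Theta(\ADV(f))$. Note that the paper does not actually reprove this theorem---it quotes it from Belovs and Rosmanis~\cite{belovs:nonadaptive}---but Appendix~\ref{app:GammaDeltaJ} adapts that proof to the $p$-parallel case, so the intended mechanism is visible there by setting $p=1$. Your lower-bound skeleton (take a dual-optimal family of per-certificate vertex potentials $\alpha_S(M)$, use them as scalar coefficients of an adversary matrix built from per-coordinate character projectors, and invoke the orthogonal-array structure) is indeed that route, and your Fourier-support observation about the arrays is correct.

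The genuine gap lies in the step you defer as the ``technical heart,'' and in the mechanism you propose for it, which is wrong in a way that matters. In the actual construction, $\widetilde{\Gamma}$ stacks the blocks $G_M=\sum_{S\subseteq[n]}\alpha_S(M)E_S$, where $E_S=\bigotimes_{i\in[n]}E_{s_i}$ with $E_0$ the projector onto the uniform vector of $\mathbb{C}^q$ and $E_1=\mathrm{I}-E_0$. The damping $\Gamma\mapsto\Gamma\circ\Delta_j$ does \emph{not} ``affect only the pieces with $j\in M$'': it acts on the coordinate-$j$ tensor factor of every $E_S$, for every certificate $M$, and it is controlled \emph{exactly} via the identity $\mathrm{I}\circ\Delta_j=0$, which permits the substitution $E_1\mapsto -E_0$ at coordinate $j$ (the map $\varphi_j$). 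Because the $E_S$ project onto mutually orthogonal subspaces, one gets $(\varphi_j(\widetilde{\Gamma}))^*\varphi_j(\widetilde{\Gamma})=\sum_{S}\bigl(\sum_{M}(\alpha_S(M)-\alpha_{S\cup\{j\}}(M))^2\bigr)E_S$, so the cross-certificate terms you worry about in the denominator vanish \emph{identically}: no near-orthogonality estimate is needed, $q\geq 2|{\cal C}|$ plays no role there, and the bound $\norm{\Gamma\circ\Delta_j}\leq 2$ is simply the dual constraint~\eqref{eqlgcdualconstraint}. Your statement that the denominator ``reproduces the edge-weight budget $\sum_e w_e$'' conflates the primal and dual programs---the weights $w_e$ never appear in the lower-bound argument.

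Conversely, the hypotheses $q\geq 2|{\cal C}|$, constant $k$, and the orthogonal arrays are needed where your sketch does not engage: in the \emph{numerator}. The matrix $\widetilde{\Gamma}$ is indexed by all of $[q]^n\times{\cal C}$, and one must restrict its rows to pairs $(x,M)$ with $M$ a $1$-certificate for $x$ (and columns to $0$-inputs) while still showing $\norm{\Gamma}\geq\sqrt{\frac{1}{2}\sum_{M}\alpha_\emptyset(M)^2}$; this is~\cite[Lemma~17]{belovs:nonadaptive} and is the one genuinely hard estimate of the proof. You would also need the variant of the adversary bound from~\cite{belovs&spalek:ksum} that tolerates duplicated row indices $(x,M)$, which your plan does not anticipate. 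So the proposal assembles the right ingredients but assigns the difficulty to the wrong norm and proposes an approximate-decoupling argument for a step that must (and does) hold exactly; as written, it does not constitute a proof.
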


For example, the element distinctness problem \ED\ on input $x\in[q]^n$ is 
 induced by the 2-bounded 1-certificate structure ${\cal C}={[n]\choose 2}$, equipped with associated orthogonal arrays $T_{\{i,j\}}=\{(v,v):v\in [q]\}$.
Hence $Q(\ED)=\Theta(\LGC({\cal C}))$.

Belovs and Rosmanis~\cite{belovs:nonadaptive} 
show that an equivalent dual definition of the learning graph complexity as a maximization problem is the following:

\begin{align}
\LGC({\cal C})=& \max \sqrt{\textstyle\sum_{M\in{\cal C}} \alpha_\emptyset(M)^2}  \label{eqlgcdual} \\
\mbox{s.t. } & \textstyle\sum_{M\in{\cal C}}(\alpha_{s(e)}(M)-\alpha_{t(e)}(M))^2\leq 1 & \mbox{for all }e\in{\cal E} \label{eqlgcdualconstraint} \\ 
& \alpha_S(M)=0 & \mbox{whenever }M\subseteq S\nonumber\\
& \alpha_S(M)\in\mathbb{R} & \hspace{-10pt} \mbox{for all } S\subseteq [n]\mbox{ and }M\in{\cal C}\nonumber
\end{align}
In particular, that means we can prove \emph{lower} bounds on $\LGC({\cal C})$ (and hence, for the functions described in Theorem~\ref{thq=lgc}, on $Q(f)$) by exhibiting a feasible solution $\{\alpha_S(M)\}$ for this maximization problem and calculating its objective value.

Before stating a similar result for $p$-parallel query complexity, we first adapt learning graphs.
Edges, which previously were of type $e=(S,j)$ with $S\subseteq [n]$ and $j\in[n]\setminus S$, are now of type $e=(S,J)$ with $S\subseteq [n]$, $J\subseteq[n]\setminus S$ and $|J|\leq p$.

\begin{definition}
The \emph{$p$-parallel 
learning graph complexity} $\LGC^{p\parallel}({\cal C})$ of $\cal C$ is defined as $\LGC({\cal C})$
where we replace the edge set $\mathcal{E}$ with ${\cal E}_p=\{(S,J):S\subseteq [n], J\subseteq [n]\setminus S, |J|\leq p\}$.  
Its dual 
is analogous. In particular, we replace constraint~\eqref{eqlgcdualconstraint} by 
$$
\sum_{M\in{\cal C}}(\alpha_{s(e)}(M)-\alpha_{t(e)}(M))^2\leq 1\mbox{~for all }e=(S,J)\in{\cal E}_p,
$$ 
where $s(e)=S$ and $t(e)=S\cup J$.  We call this modified constraint ``parallel-\eqref{eqlgcdualconstraint}.''
\end{definition}

As in the special case of $p=1$, the $p$-parallel learning graph complexity of $\mathcal{C}$ provides an upper bound on $\ADV^{p\parallel}(f)$, and hence on $Q^{p\parallel}(f)$, for any function $f$ having that same certificate structure. The proof is given in Appendix~\ref{app:parallel-cert}.

\begin{lemma}\label{lem:parallel-cert}
Let $\mathcal{C}$ be a certificate structure for $f$.
Then $\ADV^{p\parallel}(f)\leq \LGC^{p\parallel}(\mathcal{C})$.
\end{lemma}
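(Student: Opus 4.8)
The plan is to reduce the claim to producing one feasible point of the minimization (dual) form of the negative-weights adversary bound, built explicitly from an optimal learning-graph flow. By Theorem~\ref{thm:par-adv} and the bijection $x\leftrightarrow X$ used there, $\ADV^{p\parallel}(f)=\ADV(F)$, where the coordinates of $F$ are the blocks $J\in{[n]\choose \leq p}$, a single query to $F$ reveals $x_J$, and the difference matrices of $F$ are exactly the $\Delta_J$ of \eqref{eq:paralleladvdef}. Since the adversary bound equals its dual SDP exactly (by the tightness results of Reichardt et al.\ cited above), it suffices to exhibit vectors $\{v_{x,J}\}_{x\in{\cal D},\,J\in{[n]\choose \leq p}}$ with $\sum_{J:\,x_J\neq y_J}\langle v_{x,J},v_{y,J}\rangle=1$ for all $x\in f^{-1}(0),\,y\in f^{-1}(1)$, whose objective $\max_x\sum_J\norm{v_{x,J}}^2$ is at most $\LGC^{p\parallel}({\cal C})$.

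First I would fix an optimal primal solution $(w_e,\theta_e(M))_{e\in{\cal E}_p,M\in{\cal C}}$ of the program \eqref{fig:lg}, so that $\sqrt{\sum_e w_e}=\LGC^{p\parallel}({\cal C})$, the flow conditions \eqref{eq:flow}, \eqref{eq:unit-flow} hold, and each energy constraint \eqref{eq:energy} holds. Mirroring Belovs's $p=1$ construction, I would tag each edge $e=(S,J)\in{\cal E}_p$ by the value $x_S$ of the already-loaded block and set, for a $0$-input $x$, $v_{x,J}=\sum_{e=(S,J)}\sqrt{w_e}\,\ket{e}\ket{x_S}$, and for a $1$-input $y$ with a chosen certificate $M=M_y\in{\cal C}$, $v_{y,J}=\sum_{e=(S,J)}\frac{\theta_e(M)}{\sqrt{w_e}}\ket{e}\ket{y_S}$. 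The orthonormal tags give $\langle v_{x,J},v_{y,J}\rangle=\sum_{e=(S,J):\,x_S=y_S}\theta_e(M)$. The only change from the sequential case is that each $J$ is now a block of up to $p$ indices, so an edge loads a whole block at once.

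The feasibility check is the heart of the argument. For $x\in f^{-1}(0)$ and $y\in f^{-1}(1)$ with certificate $M$, the definition of a $1$-certificate forces $x_M\neq y_M$, so the agreement region $A=\{S:x_S=y_S\}$ is a down-set containing $\emptyset$ but not any $S\supseteq M$. Because the learning-graph edges only move upward (from $S$ to $S\cup J$), every edge crossing the cut from $A$ to its complement runs forward, and an edge $e=(S,J)$ crosses precisely when $x_S=y_S$ and $x_J\neq y_J$; these are exactly the terms appearing in $\sum_{J:\,x_J\neq y_J}\langle v_{x,J},v_{y,J}\rangle$. Flow conservation \eqref{eq:flow} at every interior vertex (valid since $M\not\subseteq S$ for $S\in A$) together with the unit-flow condition \eqref{eq:unit-flow} at $\emptyset$ then collapses this cut sum to the net outflow $1$. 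The main obstacle is confirming that this telescoping survives the passage to fat edges: a block $J$ may satisfy $x_J\neq y_J$ while $x$ and $y$ agree on part of $J$, so one must verify that it is exactly the blocks straddling the agreement/disagreement boundary that contribute and that conservation cancels all interior terms. This is where the ``surprisingly small modification'' is cashed out: because $\Delta_J$ in \eqref{eq:paralleladvdef} measures disagreement on the block $J$ as a whole, $A$ remains a down-set and the cut argument goes through verbatim.

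Finally I would bound the objective. With orthonormal tags, $\sum_J\norm{v_{y,J}}^2=\sum_{e}\theta_e(M_y)^2/w_e\leq 1$ by \eqref{eq:energy}, while $\sum_J\norm{v_{x,J}}^2=\sum_{e}w_e$ for every $0$-input. Rescaling the whole solution by a single scalar $t$ (replacing $v_{x,J}$ by $t\,v_{x,J}$ and $v_{y,J}$ by $t^{-1}v_{y,J}$, which leaves every inner product, hence feasibility, unchanged) and optimizing $t$ balances the two sides, giving $\max_x\sum_J\norm{v_{x,J}}^2\leq\sqrt{(\sum_e w_e)\cdot 1}=\sqrt{\sum_e w_e}=\LGC^{p\parallel}({\cal C})$. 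Since this objective equals $\ADV^{p\parallel}(f)$, the bound $\ADV^{p\parallel}(f)\leq\LGC^{p\parallel}({\cal C})$ follows.
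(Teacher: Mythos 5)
Your proposal is correct and takes essentially the same route as the paper's proof in Appendix~B: the identical vectors $\sqrt{w_{S,J}}\,\ket{S,x_S}$ for $0$-inputs and $\theta_{S,J}(M_x)/\sqrt{w_{S,J}}\,\ket{S,x_S}$ for $1$-inputs built from an optimal primal learning-graph solution, with feasibility verified by the same cut/flow-conservation argument over the down-set $\{S : x_S = y_S\}$. The only (immaterial) difference is the final normalization: you balance the two sides via the $t,t^{-1}$ rescaling to get $\sqrt{(\sum_e w_e)\cdot 1}$, whereas the paper bounds the unbalanced maximum by $\max\{1,\sum_e w_e\}\leq\sum_e w_e$ (using Cauchy--Schwarz to show $\sum_e w_e\geq 1$); both yield $\sqrt{\sum_e w_e}=\LGC^{p\parallel}(\mathcal{C})$.
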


We now generalize Theorem~\ref{thq=lgc} to the $p$-parallel case. The proof of Theorem \ref{thq=p-lgc} can be found in Appendix \ref{app:GammaDeltaJ}. It is an adaptation of the proof of \cite[Theorem~5]{belovs:nonadaptive}.
\begin{theorem}\label{thq=p-lgc}
Let $\cal C$ be a $k$-bounded 1-certificate structure for some constant $k$, $q\geq 2|{\cal C}|$, and let each $M\in{\cal C}$ be equipped with an orthogonal array $T_M$ of length $|M|$. Define a Boolean function $f:[q]^n\rightarrow\01$ as follows: $f(x)=1$ iff there exists an $M\in{\cal C}$ such that $x_M\in T_M$. Then $Q^{p\parallel}(f)=\Theta(\LGC^{p\parallel}({\cal C}))$.
\end{theorem}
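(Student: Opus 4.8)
By Theorem~\ref{thm:par-adv} we have $Q^{p\parallel}(f)=\Theta(\ADV^{p\parallel}(f))$, so it suffices to prove $\ADV^{p\parallel}(f)=\Theta(\LGC^{p\parallel}({\cal C}))$. The upper bound $\ADV^{p\parallel}(f)\leq \LGC^{p\parallel}({\cal C})$ is precisely Lemma~\ref{lem:parallel-cert}, so the whole content of the theorem is the matching lower bound $\ADV^{p\parallel}(f)=\Omega(\LGC^{p\parallel}({\cal C}))$. One might hope to get this for free by applying the Belovs--Rosmanis Theorem~\ref{thq=lgc} to the blown-up function $F$ of Theorem~\ref{thm:par-adv}; but $F$ is partial, lives over a different alphabet, and its certificate structure in terms of its own ($J$-indexed) coordinates is not of the orthogonal-array form required there. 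So instead I would adapt the adversary construction in the proof of \cite[Theorem~5]{belovs:nonadaptive} directly to the $p$-parallel setting.

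\textbf{Constructing the adversary matrix from the dual $p$-parallel LGC.} The plan is to take an optimal feasible solution $\{\alpha_S(M)\}$ to the dual $p$-parallel learning graph program, with objective $\sqrt{\sum_{M\in{\cal C}}\alpha_\emptyset(M)^2}=\LGC^{p\parallel}({\cal C})$ and constraints parallel-\eqref{eqlgcdualconstraint}, and build from it an adversary matrix $\Gamma$ for $f$. Following Belovs--Rosmanis, I would work in the character (Fourier) basis supplied by the orthogonal arrays: the defining property of an orthogonal array of length $|M|$ makes the restriction map $x\mapsto x_M$ behave like a uniform surjection onto $T_M$, so that the nontrivial characters of the $M$-coordinates are mutually orthogonal across inputs. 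This lets one assemble $\Gamma$ as a sum over $M\in{\cal C}$ of operators $\Gamma_M$, each governed by the potential landscape $\{\alpha_S(M)\}_{S\subseteq[n]}$ (with the boundary condition $\alpha_S(M)=0$ for $M\subseteq S$ playing the role of the flow sink). The two quantities to control are then: (i) $\norm{\Gamma}=\Omega(\sqrt{\sum_{M\in{\cal C}}\alpha_\emptyset(M)^2})=\Omega(\LGC^{p\parallel}({\cal C}))$, obtained by exhibiting a test vector on which the $\emptyset$-level potentials add coherently; and (ii) $\norm{\Gamma\circ\Delta_J}=O(1)$ for every $J$ with $|J|\leq p$. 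Granting (i) and (ii), Eq.~\eqref{eq:paralleladvdef} gives $\ADV^{p\parallel}(f)\geq \norm{\Gamma}/\max_{|J|\leq p}\norm{\Gamma\circ\Delta_J}=\Omega(\LGC^{p\parallel}({\cal C}))$, which is what we need.

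\textbf{The main obstacle: bounding $\norm{\Gamma\circ\Delta_J}$ for $|J|\le p$.} Step (i) carries over essentially unchanged from the $p=1$ case, since the objective is identical in both dual programs; the genuinely new work is step (ii), and this is where I expect the difficulty to lie. Taking the Hadamard product with $\Delta_J$ zeros out exactly the $(x,y)$-entries with $x_J=y_J$, and in the character decomposition this isolates the transitions that cross the coordinate block $J$, i.e.\ the edges $e=(S,J')\in{\cal E}_p$ with $J'\subseteq J$ that move from level $S$ to level $S\cup J'$. The norm $\norm{\Gamma\circ\Delta_J}$ should therefore reduce to a quantity controlled by $\sum_{M\in{\cal C}}(\alpha_{s(e)}(M)-\alpha_{t(e)}(M))^2$ over such edges, and the $p$-parallel constraint parallel-\eqref{eqlgcdualconstraint} bounds each of these sums by $1$ (this is precisely why the edge set was enlarged to ${\cal E}_p$ and the constraint strengthened). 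The delicate point is verifying that deleting a whole block of up to $p$ coordinates at once still yields a \emph{clean} decomposition: one must re-run the orthogonality estimates of Belovs--Rosmanis with coordinate blocks of size up to $p$ rather than single coordinates, and check that the hypothesis $q\geq 2|{\cal C}|$ still makes the cross terms between different certificates $M$ (and between different levels) negligible, so that no coupling inflates the norm beyond a constant. Once this block version of the orthogonality argument is in place, (ii) follows and the theorem is complete.
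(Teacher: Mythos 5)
Your high-level plan coincides with the paper's: reduce via Theorem~\ref{thm:par-adv}, get the upper bound from Lemma~\ref{lem:parallel-cert}, and for the lower bound adapt the Belovs--Rosmanis adversary matrix, built from an optimal dual solution $\{\alpha_S(M)\}$, so that $\norm{\Gamma}=\Omega\bigl(\sqrt{\sum_{M}\alpha_\emptyset(M)^2}\bigr)$ while $\norm{\Gamma\circ\Delta_J}=O(1)$ for all $|J|\leq p$. But your proposal stops exactly where the new work begins: step (ii) is deferred with ``once this block version of the orthogonality argument is in place, (ii) follows,'' and no such argument is supplied. That step is the entire technical content of the theorem beyond the $p=1$ case, and the paper's mechanism for it is concrete: with $\widetilde{\Gamma}$ the \emph{stack} (not sum --- rows are duplicated as $(x,M)$, a device from Belovs--\v{S}palek) of blocks $G_M=\sum_{S}\alpha_S(M)E_S$, one generalizes the single-coordinate map $\varphi_j$ to a block map $\varphi_J$ with $\varphi_J(E_S)=-\sum_{S':S\setminus J\subseteq S'\subsetneq S}E_{S'}$ when $J\subseteq S$ (and $\varphi_J(E_S)=E_S$ otherwise), proves the identity $\widetilde{\Gamma}\circ\tilde{\Delta}_J=\varphi_J(\widetilde{\Gamma})\circ\tilde{\Delta}_J$ via the telescoping $E_J=\mathrm{I}^{\otimes|J|}-\sum_{S'\subsetneq J}E_{S'}$ together with $(\mathrm{I}^{\otimes|J|}\otimes F)\circ\tilde{\Delta}_J=0$, pays only a factor~$2$ through the $\gamma_2$-norm bound of Fact~\ref{optp}, and then computes $\norm{\varphi_J(\widetilde{\Gamma})}$ \emph{exactly} as $\max_{S}\sqrt{\sum_{M}(\alpha_S(M)-\alpha_{S\cup J}(M))^2}$, which is at most~$1$ by constraint parallel-\eqref{eqlgcdualconstraint} applied to the edge $(S,J\setminus S)\in{\cal E}_p$. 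Without this lemma (or a substitute) your step (ii) is an unproven assertion.

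Moreover, the route you sketch toward (ii) points in a somewhat wrong direction. There are no cross terms between different certificates $M$ to control in this step, and the hypothesis $q\geq 2|{\cal C}|$ plays no role there: because the rows are duplicated by $M$, one has $(\varphi_J(\widetilde{\Gamma}))^*\varphi_J(\widetilde{\Gamma})=\sum_{M}(\varphi_J(G_M))^*\varphi_J(G_M)$, and since the projectors $E_S$ have mutually orthogonal ranges the norm computation is exact, not an estimate in which cross terms must be made ``negligible.'' (Had you literally summed the $\Gamma_M$ over $M$, as your text suggests, you \emph{would} face such cross terms --- this is precisely what the row-duplication avoids.) The orthogonal arrays and $q\geq 2|{\cal C}|$ enter only in step (i), the bound $\norm{\Gamma}\geq\sqrt{\frac{1}{2}\sum_{M}\alpha_\emptyset(M)^2}$, which the paper imports verbatim from Belovs--Rosmanis since it is unaffected by the passage to blocks; there your proposal is correct that nothing changes. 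So the genuine gap is the missing block-$\varphi_J$ identity and the exact projector-norm computation, and the repair plan you outline (re-running orthogonality estimates against cross-certificate coupling) would be aimed at a difficulty that does not arise while missing the one that does.
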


\section{Parallel quantum query complexity of specific functions}
\subsection{Algorithms}\label{secupperbounds}

In this section we give upper bounds for element distinctness and $k$-sum in the $p$-parallel quantum query model, by way of quantum walk algorithms. 

Our $p$-parallel algorithm for element distinctness is based on the sequential query algorithm for element distinctness of Ambainis~\cite{ambainis:edj}. Ambainis's algorithm uses a quantum walk on a Johnson graph, $J(n,r)$, which has vertex set $V=\{S\subseteq[n]:|S|=r\}$ and edge set $\{\{S,S'\}\subseteq V:|S\setminus S'|=1\}$.
Each state $S\in V$ represents a set of queried indices. The algorithm seeks a  state $S$ containing $(i,x_i)$ and $(j,x_j)$ such that $i\neq j$ and $x_i=x_j$. Such a state is said to be \emph{marked}.

\begin{theorem}\label{th:ed-u-bd}
Element distinctness on $[q]^n$ has $ Q^{p\parallel}(\ED)=O((n/p)^{2/3})$.
\end{theorem}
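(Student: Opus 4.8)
The plan is to parallelize Ambainis's quantum walk on the Johnson graph $J(n,r)$ within the MNRS framework described in the preliminaries, keeping track of the fact that costs are now measured in \emph{$p$-parallel} queries rather than single queries. The key observation is that in Ambainis's walk each vertex $S\in V$ stores the set of queried pairs $\{(i,x_i):i\in S\}$, so the three costs $\mathsf{S}$, $\mathsf{U}$, $\mathsf{C}$ are all essentially about loading or refreshing values of $x$ at various indices. Setting up the stationary superposition requires learning $x_i$ for all $r$ indices in the chosen set, which sequentially costs $\mathsf{S}=r$ queries; using $p$-parallel queries this becomes $\mathsf{S}=O(\lceil r/p\rceil)$. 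The update step swaps one index out of $S$ and one in, requiring a constant number of queries sequentially, hence $\mathsf{U}=O(1)$ in either model since one parallel query already suffices. The checking cost is $\mathsf{C}=0$ (or $O(1)$), since marked-ness is determined entirely from the data already stored at the vertex, with no new query needed. The spectral gap and the fraction of marked vertices are combinatorial properties of the Johnson graph and the input, and are \emph{unchanged} by parallelization: $\delta=\Omega(1/r)$ and $\eps=\Omega(r^2/n^2)$ whenever a collision exists.

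First I would plug these $p$-parallel costs into the MNRS bound
\begin{equation*}
O\left(\mathsf{S}+\frac{1}{\sqrt{\eps}}\left(\frac{1}{\sqrt{\delta}}\mathsf{U}+\mathsf{C}\right)\right)
= O\left(\frac{r}{p}+\frac{n}{r}\left(\sqrt{r}\cdot 1+0\right)\right)
= O\left(\frac{r}{p}+\frac{n}{\sqrt{r}}\right).
\end{equation*}
Then I would optimize over the walk parameter $r$: the two terms balance when $r/p = n/\sqrt r$, i.e.\ $r^{3/2}=np$, giving $r=(np)^{2/3}$. Substituting back, both terms become $O((np)^{2/3}/p)=O(n^{2/3}p^{-1/3})=O((n/p)^{2/3})$, which is exactly the claimed bound. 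One should also check the boundary regime where $r$ would exceed $n$ or fall below the trivial threshold, but for the relevant range of $p$ this optimum is valid.

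The step I expect to require the most care is the honest justification that the setup cost genuinely parallelizes to $\mathsf{S}=O(r/p)$ rather than merely $O(r)$. Naively, loading $r$ arbitrary values of $x$ into the vertex register is $r$ sequential queries, and one must argue that these can be grouped into $\lceil r/p\rceil$ batches of $p$ parallel queries each while still producing the correct coherent superposition $\sum_{|S|=r}\sqrt{\pi_S}\ket{S}\ket{x_S}$. This is where the extended query map with the $i=0$ ``no-query'' option is needed, so that batches can be padded and indices loaded in a fixed, input-independent schedule. The analogous subtlety for the update step is milder, since $O(1)$ queries fit trivially into a single parallel query, but I would still confirm that the data-structure operations (insertion/deletion into the stored set, and any associated history register needed for reversibility) remain input-independent and therefore free. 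Once these bookkeeping points are settled, the query-count optimization is routine and yields the theorem.
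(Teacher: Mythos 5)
Your proposal has a genuine gap: it parallelizes only the \emph{setup}, not the walk itself, and the final optimization masks this with an algebra error. With your costs $\mathsf{S}=O(r/p)$, $\mathsf{U}=O(1)$, $\delta=\Omega(1/r)$, $\eps=\Omega(r^2/n^2)$, the MNRS bound is $O(r/p+n/\sqrt{r})$, and at your optimum $r=(np)^{2/3}$ both terms equal $n^{2/3}p^{-1/3}$ --- but $(n/p)^{2/3}=n^{2/3}p^{-2/3}$, so you are short of the claimed bound by a factor of $p^{1/3}$. The step ``$O(n^{2/3}p^{-1/3})=O((n/p)^{2/3})$'' is simply false for $p=\omega(1)$. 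The reason is structural, not a matter of tuning $r$: in your walk on $J(n,r)$ each update swaps a single index, so during the $\frac{1}{\sqrt{\eps}}\cdot\frac{1}{\sqrt{\delta}} = \frac{n}{r}\sqrt{r}$ update steps the $p$-fold query parallelism sits idle; the term $n/\sqrt{r}$ receives no benefit from $p$ at all, and no choice of $r$ can repair that. (Indeed, the matching lower bound $\Omega((n/p)^{2/3})$ of Theorem~\ref{thEDparallelLB} shows your algorithm, as analyzed, is genuinely suboptimal rather than merely loosely bounded.)

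The missing idea --- and what the paper actually does --- is to make each unit-cost step of the walk move $p$ indices at once. The paper walks on $J(n,r/p)^p$, the product of $p$ copies of the Johnson graph $J(n,r/p)$: a vertex is a $p$-tuple $(S_1,\ldots,S_p)$ with $|S_i|=r/p$, and one step swaps one index in \emph{every} copy simultaneously, which still costs only two $p$-parallel queries, so $\mathsf{U}=O(1)$ as before. Since the eigenvalues of the product graph are products of eigenvalues of the factors, the spectral gap equals that of a single copy $J(n,r/p)$, namely $\delta=\Omega(p/r)$ --- a factor $p$ better than your $\Omega(1/r)$ --- while $\eps=\Omega(r^2/n^2)$ is essentially unchanged because a vertex is marked whenever a collision occurs anywhere in $\bigcup_{i=1}^p S_i$. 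This gives the bound $O(r/p + n/\sqrt{rp})$, and setting $r=n^{2/3}p^{1/3}$ yields the claimed $O((n/p)^{2/3})$. Your bookkeeping remarks about the $i=0$ padding query and input-independent schedules are correct as far as they go, but they address the easy part of the argument; without parallelizing the update phase the theorem is not proved.
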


\begin{proof}
We modify Ambainis's quantum walk algorithm slightly. 
Consider a walk $J(n,r/p)^p$, on $p$ copies of the Johnson graph $J(n,r/p)$.
Vertices are $p$-tuples $(S_1,S_2,\ldots,S_p)$ where, for each $i\in [p]$, $S_i\subseteq [n]$ and $|S_i|=r/p$.
Two vertices $(S_1,S_2,\ldots,S_p)$ and $(S'_1,S'_2,\ldots,S'_p)$ are adjacent if, for each $i\in [p]$, $|S_i\setminus S'_i|=1$.
We call a state $(S_1,S_2,\ldots,S_p)$ \emph{marked} if there are $j,j'\in\bigcup_{i=1}^p S_i$ such that $x_j=x_{j'}$.
Since the stationary distribution is $\mu^p$, where $\mu$ is the uniform distribution on $\binom{[n]}{r/p}$, 
the probability that a state is marked is at least $\eps=\Omega(r^2/n^2)$. 

The setup cost is only $\mathsf{S}=O(r/p)$ $p$-parallel queries, since it suffices to query $r$ elements in the initial superposition over all states. 
Similarly, the update requires that we query and unquery $p$ elements, but we can accomplish this in two $p$-parallel queries, so~$\mathsf{U}=O(1)$. Also, $\mathsf{C}=0$.
Finally, the eigenvalues of the product of $p$ copies of a graph are exactly the products of $p$ eigenvalues of that graph. Hence if the largest eigenvalue of a graph is 1 and the second-largest is $1-\delta$, then the same will be true for the product graph.  Accordingly, the spectral gap~$\delta$ of $p$ copies of $J(n,r/p)$ is exactly the spectral gap of one copy of $J(n,r/p)$, which is $\Omega(p/r)$.
We can now upper bound the $p$-parallel query complexity of element distinctness as
$$
O\left(\mathsf{S}+\frac{1}{\sqrt{\eps}}\left(\frac{1}{\sqrt{\delta}}\mathsf{U}+\mathsf{C}\right)\right)  = 
O\left(\frac{r}{p}+\frac{n}{r}\left(\sqrt{\frac{r}{p}}\right)\right) 
 =  O\left(\frac{r}{p}+\frac{n}{\sqrt{rp}}\right).
$$
Setting $r$ to the optimal $n^{2/3}p^{1/3}$ gives an upper bound of $O((n/p)^{2/3})$. 
\end{proof}

\noindent It is easy to generalize our element distinctness upper bound to $k$-sum:

\begin{theorem}\label{thm:ksum-algo}
$k$-sum  on $[q]^n$ has $ Q^{p\parallel}(\kSUM)=O((n/p)^{k/(k+1)})$.
\end{theorem}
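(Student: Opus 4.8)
The plan is to mimic the proof of Theorem~\ref{th:ed-u-bd} almost verbatim, changing only the definition of the marked set and the final optimization over~$r$. I would again run the MNRS quantum walk on the product graph $J(n,r/p)^p$, whose vertices are $p$-tuples $(S_1,\ldots,S_p)$ with each $S_i\in\binom{[n]}{r/p}$ and whose adjacency moves one element in each coordinate. The only structural change is the marking rule: a vertex $(S_1,\ldots,S_p)$ is \emph{marked} if $\bigcup_{i=1}^p S_i$ contains distinct indices $i_1,\ldots,i_k$ with $x_{i_1}+\cdots+x_{i_k}=0 \bmod q$. Since this is a property of the values already stored at the vertex, checking it costs no queries, so again $\mathsf{C}=0$.

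All of the walk parameters except the marked fraction carry over unchanged from Theorem~\ref{th:ed-u-bd}. The setup still queries $r$ elements spread across the $p$ parallel streams, giving $\mathsf{S}=O(r/p)$; each update still queries and unqueries one element per copy, i.e.\ $p$ elements in total, realizable in two $p$-parallel queries, so $\mathsf{U}=O(1)$; and the spectral gap is still that of a single copy of $J(n,r/p)$, namely $\delta=\Omega(p/r)$, because the eigenvalues of a product graph are products of the eigenvalues of its factors.

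The one genuinely new estimate is the stationary mass $\eps$ of the marked set. Under the stationary distribution $\mu^p$ each $S_i$ is an independent uniform subset of size $r/p$, so $\bigcup_i S_i$ covers a random set of $\Theta(r)$ distinct indices when $r=o(n)$. Fixing one solution tuple $\{i_1,\ldots,i_k\}$ — whose existence is exactly what the walk is meant to detect — the probability that all $k$ of its indices fall into $\bigcup_i S_i$ is $\Theta((r/n)^k)$ by a short inclusion–exclusion argument for constant~$k$, so whenever a solution exists we get $\eps=\Omega(r^k/n^k)$ (additional solutions only increase $\eps$). I expect this covering computation to be the only mildly nontrivial step: one must confirm that spreading the $r$ queried indices over $p$ independent blocks, rather than drawing a single size-$r$ set, changes the covering probability by at most a constant factor, which holds because the blocks are independent and $r \ll n$.

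Plugging these into the MNRS bound gives
$$O\!\left(\frac{r}{p}+\frac{1}{\sqrt{\eps}}\cdot\frac{1}{\sqrt{\delta}}\cdot\mathsf{U}\right)=O\!\left(\frac{r}{p}+\frac{n^{k/2}}{r^{(k-1)/2}\sqrt{p}}\right).$$
Balancing the two terms forces $r^{(k+1)/2}=n^{k/2}p^{1/2}$, i.e.\ $r=n^{k/(k+1)}p^{1/(k+1)}$, at which point $r/p=(n/p)^{k/(k+1)}$, yielding the claimed bound $O((n/p)^{k/(k+1)})$. As a sanity check, $k=2$ recovers $r=n^{2/3}p^{1/3}$ and the element-distinctness bound of Theorem~\ref{th:ed-u-bd}.
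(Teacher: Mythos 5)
Your proposal is correct and matches the paper's proof of Theorem~\ref{thm:ksum-algo} essentially verbatim: the same walk on $p$ copies of $J(n,r/p)$ with the marking rule changed to a $k$-sum certificate, the same parameters $\mathsf{S}=O(r/p)$, $\mathsf{U}=O(1)$, $\mathsf{C}=0$, $\delta=\Omega(p/r)$, $\eps=\Omega(r^k/n^k)$, and the same optimization $r=n^{k/(k+1)}p^{1/(k+1)}$. Your inclusion--exclusion justification of $\eps=\Omega(r^k/n^k)$ is a welcome elaboration of a step the paper simply asserts.
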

\begin{proof}
Again, we walk on $p$ copies of $J(n,r/p)$, but now we consider a state $(S_1,S_2,\ldots,S_p)$ marked if there are distinct indices $i_1,\ldots,i_k\in \bigcup_{i=1}^p S_i$ such that $\sum_{j=1}^k x_{i_j}=0\pmod q$. The fraction of marked states in a $1$-instance is $\eps=\Omega(r^k/n^k)$. All other parameters are as in Theorem~\ref{th:ed-u-bd}. We get the following upper bound for $k$-sum:
\begin{eqnarray*}
O\left(\mathsf{S}+\frac{1}{\sqrt{\eps}}\left(\frac{1}{\sqrt{\delta}}\mathsf{U}+\mathsf{C}\right)\right)  = 
O\left(\frac{r}{p}+\frac{n^{k/2}}{r^{k/2}}\left(\sqrt{\frac{r}{p}}\right)\right)
 =  O\left(\frac{r}{p}+\frac{n^{k/2}}{r^{(k-1)/2}\sqrt{p}}\right).
\end{eqnarray*}
Setting $r=n^{k/(k+1)}p^{1/(k+1)}$ gives $O((n/p)^{k/(k+1)})$. 
\end{proof}

\subsection{Lower bounds}

We now use the ideas from Section~\ref{sec:belovs} to prove $p$-parallel lower bounds for \ED{} and \kSUM, matching our upper bounds 
if the alphabet size~$q$ is sufficiently large. Our proofs are generalizations of the sequential lower bounds in~\cite[Section~4]{belovs:nonadaptive}.

\begin{theorem}\label{thEDparallelLB}
For $q\geq 2{n\choose 2}$, element distinctness on $[q]^n$ has $Q^{p\parallel}(\ED)=\Omega((n/p)^{2/3})$.
\end{theorem}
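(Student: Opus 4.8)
\section*{Proof proposal}

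The plan is to reduce the lower bound to a statement about $p$-parallel learning graph complexity and then exhibit an explicit feasible dual solution. Element distinctness is induced by the $2$-bounded $1$-certificate structure ${\cal C}={[n]\choose 2}$ together with the orthogonal arrays $T_{\{i,j\}}=\{(v,v):v\in[q]\}$, and the hypothesis $q\geq 2{n\choose 2}=2|{\cal C}|$ is exactly what Theorem~\ref{thq=p-lgc} requires, so that theorem gives $Q^{p\parallel}(\ED)=\Theta(\LGC^{p\parallel}({\cal C}))$. It therefore suffices to prove $\LGC^{p\parallel}({\cal C})=\Omega((n/p)^{2/3})$, and by the dual (maximization) formulation this reduces to constructing reals $\{\alpha_S(M)\}_{S\subseteq[n],\,M\in{\cal C}}$ with $\alpha_S(M)=0$ whenever $M\subseteq S$, satisfying the $p$-parallel constraint parallel-\eqref{eqlgcdualconstraint} for every edge $e=(S,J)\in{\cal E}_p$, and making the objective $\sqrt{\sum_M\alpha_\emptyset(M)^2}$ as large as $\Omega((n/p)^{2/3})$. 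This is a direct generalization of the sequential witness of Belovs and Rosmanis~\cite[Section~4]{belovs:nonadaptive}.

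I would use a witness depending only on $s=|S|$ and on $m=|S\cap M|\in\{0,1,2\}$, writing $\alpha_S(M)=\phi_m(s)$ with $\phi_2\equiv 0$. I would pick a cutoff $r=n^{2/3}p^{1/3}$, set $\phi_0(s)=c\,n^{-1/3}p^{-2/3}(1-s/r)_+$ for a small constant $c>0$, and let $\phi_1$ essentially track $\phi_0$ shifted by $p$, so that the $0\!\to\!1$ transition is nearly free. The objective is then $\sqrt{{n\choose 2}}\,\phi_0(0)=\Theta(n^{2/3}/p^{2/3})$, as claimed. To check parallel-\eqref{eqlgcdualconstraint} for a fixed edge $(S,J)$ with $|J|\leq p$, I would split the sum over $M\in{\cal C}$ into the $O(1)$ cases according to how $J$ and $S$ meet $M$: the $\approx{n\choose 2}$ certificates disjoint from $J$ with $m=0$ (contributing $(\phi_0(s)-\phi_0(s+|J|))^2$), the $\approx sn$ disjoint ones with $m=1$, the $\approx|J|\,n$ certificates meeting $J$ in one point with their other element outside $S$ ($0\!\to\!1$), the $\approx|J|\,s$ certificates whose second element lies in $S$ ($1\!\to\!2$, contributing $\phi_1(s)^2$), and crucially the $\approx\binom{|J|}{2}$ certificates with both elements inside $J$, which jump straight from $\phi_0(s)$ to $0$. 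With the chosen parameters each group contributes $O(c^2)$ times a factor that is $\leq 1$ when $p\leq n$, so the total is $\leq 1$ for $c$ small enough.

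The main obstacle, and the only place where the parallel case genuinely differs from the sequential one, is the new group of certificates contained entirely in $J$: these contribute $\binom{|J|}{2}\phi_0(s)^2\approx p^2\phi_0(s)^2$ and force $\phi_0\lesssim 1/p$, which is exactly the cost of a single parallel query uncovering an entire collision at once. Balancing this cap against the slow-change constraint ${n\choose 2}(\phi_0(s)-\phi_0(s+p))^2\leq 1$, which limits $\phi_0(0)$ to about $r/(pn)$ since there are only $r/p$ super-steps of width $p$, is what pins down $r=n^{2/3}p^{1/3}$ and yields the tight exponent. I would stress that the naive alternative of rescaling the sequential witness by $1/p$ and telescoping each parallel edge through $p$ sequential edges only gives the weaker bound $\Omega(n^{2/3}/p)$; obtaining $(n/p)^{2/3}$ requires spreading the loading of non-certificate elements across the $r/p$ parallel super-steps rather than paying for one element at a time.
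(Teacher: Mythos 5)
Your proposal is correct and follows essentially the same route as the paper: the same reduction via Theorem~\ref{thq=p-lgc} to a dual witness for $\LGC^{p\parallel}\bigl(\binom{[n]}{2}\bigr)$, the same cutoff $r=n^{2/3}p^{1/3}$ and amplitude $\alpha_0=\Theta(n^{-1/3}p^{-2/3})$, and the same case analysis over how $M$ meets $S$ and $J$ --- indeed the paper dispenses with your shifted $\phi_1$ and simply sets $\alpha_S(M)=\alpha_{|S|}$ whenever $M\not\subseteq S$, which already passes all constraints. One caveat on your narrative (not on the verification, which covers all groups): the constraint that actually pins down $r$ is the $\leq sp$ certificates with one element in $S$ and one in $J$, forcing $\alpha_0\lesssim 1/\sqrt{rp}$, rather than the $\binom{|J|}{2}\leq p^2/2$ certificates inside $J$, which only force $\alpha_0\lesssim 1/p$ --- a weaker cap since $p\leq r$ whenever $p\leq n$, so balancing the latter against the slow-change bound $\alpha_0\lesssim r/(pn)$ would give $r\approx n$, not $n^{2/3}p^{1/3}$.
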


\begin{proof}
Recall that element distinctness is induced by the 1-certificate structure ${\cal C}={[n]\choose 2}$, equipped with associated orthogonal arrays $T_{\{i,j\}}=\{(v,v):v\in [q]\}$.
By Theorem~\ref{thq=p-lgc}, it suffices to prove the lower bound on the $p$-parallel learning graph complexity of $\ED$.
For this, it suffices to exhibit a feasible solution to the parallel version of dual~\eqref{eqlgcdual} and to lower bound its objective function.  
Note that the elements of ${\cal E}_p$ are now of the form $(S,J)$, where $S\subseteq [n]$ and $J\subseteq[n]\setminus S$ with $|J|\leq p$.
Define 
$$ 
\alpha_j=\frac{1}{2n} \max( (n/p)^{2/3}- j/p, 0), \quad\mbox{and}\quad \alpha_S(M)=\left\{\begin{array}{ll} 
0 & \mbox{if }M\subseteq S\\
\alpha_{|S|} & \mbox{otherwise.}\end{array}\right.
$$
To show that this is a feasible solution, the only constraint we need to verify is parallel-\eqref{eqlgcdualconstraint}.
Fix $S\subseteq[n]$ of some size~$s$, and a set $J\subseteq [n]\setminus S$ with $|J|\leq p$.
Let $L$ denote the left-hand side of parallel-\eqref{eqlgcdualconstraint}, which is a sum over all ${n\choose 2}$ certificates $M\in{\cal C}$. 
With respect to $e=(S,J)$, there are four kinds of $M=\{i,j\}$:
\begin{enumerate}
\item $i,j\in S$. Then $\alpha_{t(e)}(M)=\alpha_{s(e)}(M)=0$, so these~$M$ contribute~0 to $L$.
\item $i\in S,j\in J$. There are $s|J|\leq sp$ such~$M$, and each contributes $\alpha_s^2$ to~$L$ because $\alpha_{s(e)}(M)=\alpha_s$ and $\alpha_{t(e)}(M)=0$.
\item $i,j\not\in S$, $i,j\in J$. There are ${|J|\choose 2}\leq {p\choose 2}$ such~$M$, each contributes $\alpha_s^2$ to~$L$.
\item $i$ and/or $j\not\in S\cup J$. There are ${n\choose 2}-{s+|J|\choose 2}\leq n^2$ such~$M$, each contributes $|\alpha_s-\alpha_{s+|J|}|^2$ to~$L$.
\end{enumerate}
Hence, using $\alpha_s=0$ if $s\geq n^{2/3}p^{1/3}$; $\alpha_s\leq\alpha_0=\frac{1}{2p^{2/3}n^{1/3}}$;
and $|\alpha_s-\alpha_{s+|J|}|^2\leq 1/4n^2$, we can establish constraint parallel-\eqref{eqlgcdualconstraint}:
$$
L\leq 
\left(sp+{p\choose 2}\right)\alpha_s^2 + n^2|\alpha_s-\alpha_{s+|J|}|^2 \leq p(n^{2/3}p^{1/3}+p/2)\frac{1}{4p^{4/3}n^{2/3}} + n^2\frac{1}{4n^2}\leq 1.
$$
Hence our solution is feasible.
Its objective value is $\sqrt{{n\choose 2}\alpha_0^2}=\Omega((n/p)^{2/3})$. 
\end{proof}

\noindent The lower bound proof for $\kSUM$ is similar.
Here we use certificate structure ${\cal C}={[n]\choose k}$ with the orthogonal array $T=\{(v_1,\ldots,v_k) : \sum_{i=1}^k v_i=0 \mod q\}$, which induces \kSUM. In Appendix \ref{appksumparallelLB}, we show that the following solution is feasible for $\LGC^{p\parallel}(\cal C)$:
$$
\alpha_j=\frac{1}{2n^{k/2}}\max((n/p)^{k/(k+1)} - j/p, 0)\quad\mbox{and}\quad \alpha_S(M)=\left\{\begin{array}{ll}
0 &\mbox{if }M\subseteq S\\
\alpha_{|S|} & \mbox{otherwise;}\end{array}\right.
$$
Since its objective value is $\sqrt{{n \choose k} \alpha_0^2}=\Omega\left((n/p)^{k/(k+1)}\right)$, we obtain

\begin{theorem}\label{thksumparallelLB}
For $q\geq 2{n\choose k}$, $\kSUM$ on $[q]^n$ has $Q^{p\parallel}(\kSUM)=\Omega\left((n/p)^{k/(k+1)}\right)$.
\end{theorem}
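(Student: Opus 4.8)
The plan is to mirror the element-distinctness lower bound (Theorem~\ref{thEDparallelLB}) step for step, since \kSUM\ differs from \ED\ only in the size of its certificates. First I would recall that \kSUM\ is induced by the $k$-bounded $1$-certificate structure ${\cal C}={[n]\choose k}$ equipped with the orthogonal array $T=\{(v_1,\dots,v_k):\sum_{i=1}^k v_i=0\bmod q\}$, so that for $q\geq 2{n\choose k}$ Theorem~\ref{thq=p-lgc} applies and gives $Q^{p\parallel}(\kSUM)=\Theta(\LGC^{p\parallel}({\cal C}))$. It therefore suffices to lower-bound $\LGC^{p\parallel}({\cal C})$, and by the $p$-parallel dual~\eqref{eqlgcdual} this reduces to exhibiting one feasible assignment $\{\alpha_S(M)\}$ and computing its objective value. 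I would take exactly the assignment stated before the theorem, namely $\alpha_j=\frac{1}{2n^{k/2}}\max((n/p)^{k/(k+1)}-j/p,0)$ with $\alpha_S(M)=0$ when $M\subseteq S$ and $\alpha_S(M)=\alpha_{|S|}$ otherwise.

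The only nontrivial thing to check is the constraint parallel-\eqref{eqlgcdualconstraint} for an arbitrary edge $e=(S,J)$ with $|S|=s$ and $|J|\leq p$. I would split $L=\sum_{M\in{\cal C}}(\alpha_{s(e)}(M)-\alpha_{t(e)}(M))^2$ according to how each certificate $M\in{[n]\choose k}$ meets $S$ and $J$, just as in the four-case analysis for \ED, the only change being that $M$ is now a $k$-set rather than a pair. A certificate with $M\subseteq S$ contributes $0$; a certificate with $M\subseteq S\cup J$ but $M\not\subseteq S$ contributes $\alpha_s^2$ (since $\alpha_{s(e)}(M)=\alpha_s$ and $\alpha_{t(e)}(M)=0$); and a certificate with $M\not\subseteq S\cup J$ contributes $(\alpha_s-\alpha_{s+|J|})^2$. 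The number of certificates of the middle type is ${s+|J|\choose k}-{s\choose k}\leq |J|{s+|J|\choose k-1}$, and the number of the last type is at most ${n\choose k}\leq n^k$.

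To finish I would use three elementary bounds on the weights: $\alpha_s=0$ once $s\geq n^{k/(k+1)}p^{1/(k+1)}$, so (assuming the interesting regime $p\leq n$) only $s+|J|=O(n^{k/(k+1)}p^{1/(k+1)})$ matters; $\alpha_s\leq\alpha_0=\frac{1}{2n^{k/2}}(n/p)^{k/(k+1)}$; and, since $\alpha_j$ is $\frac{1}{2n^{k/2}p}$-Lipschitz in $j$, $|\alpha_s-\alpha_{s+|J|}|^2\leq |J|^2/(4n^{k}p^2)\leq 1/(4n^{k})$. Substituting into $L\leq |J|{s+|J|\choose k-1}\alpha_0^2+n^k|\alpha_s-\alpha_{s+|J|}|^2$, the last term is at most $1/4$, while the middle term scales like $p\cdot(n^{k/(k+1)}p^{1/(k+1)})^{k-1}\cdot\alpha_0^2$, whose powers of $n$ and $p$ cancel to leave a constant, so $L\leq 1$ after fixing constants. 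The objective value is $\sqrt{{n\choose k}\alpha_0^2}=\Omega((n/p)^{k/(k+1)})$, which yields the theorem. I expect the main obstacle to be the bookkeeping in this last step: pinning down the combinatorial bound on the ``boundary'' certificates ${s+|J|\choose k}-{s\choose k}$ and verifying that the exponents in the middle term cancel exactly (the exponent of $p$ works out to $1+(k-1)/(k+1)-2k/(k+1)=0$, and likewise for $n$), rather than any conceptual difficulty.
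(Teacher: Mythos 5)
Your proposal is correct and takes essentially the same route as the paper's proof in Appendix~\ref{appksumparallelLB}: the same certificate structure ${\cal C}={[n]\choose k}$ with the $k$-sum orthogonal array, an appeal to Theorem~\ref{thq=p-lgc}, the identical dual feasible solution $\alpha_j$, and the same split of certificates $M$ by their intersection with $S$ and $J$, with the same exponent cancellations in $n$ and $p$. Your bookkeeping (bounding the boundary certificates by ${s+|J|\choose k}-{s\choose k}\leq |J|{s+|J|\choose k-1}$ and the remaining ones by ${n\choose k}\leq n^k$, paired with the Lipschitz bound $|\alpha_s-\alpha_{s+|J|}|^2\leq 1/(4n^k)$) is if anything marginally cleaner than the paper's, but it is the same argument.
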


\section{Some general bounds}\label{secgeneralbounds}

In this section we will relate quantum and classical $p$-parallel complexity.
For the sequential model ($p=1$) it is known that quantum bounded-error query complexity is no more than a 6th power less than classical deterministic complexity, for all total Boolean functions~\cite{bbcmw:polynomialsj}.  Here we will see to what extent we can prove a similar result for the $p$-parallel model.

We start with a few definitions, referring to~\cite{buhrman&wolf:dectreesurvey} for more details.
Let $f:\01^n\rightarrow\01$ be a total Boolean function.
For $b\in\01$, a \emph{$b$-certificate} for~$f$ is an assignment~$C:S\rightarrow\01$ to a subset $S$ of the $n$ variables, such that $f(x)=b$ whenever $x$ is consistent with $C$.
The \emph{size} of $C$ is $|S|$.
The \emph{certificate complexity $C_x(f)$ of $f$ on $x$} is the size of
a smallest $f(x)$-certificate that is consistent with $x$.
The \emph{certificate complexity} of $f$ is $C(f)=\max_x C_x(f)$.
The \emph{$1$-certificate complexity} of $f$ is
$C^{(1)}(f)=\max_{\{x:f(x)=1\}}C_x(f)$.
Given an input $x\in\01^n$ and subset $B\subseteq[n]$ of indices of variables, let $x^B$ denote the $n$-bit input obtained from~$x$ by negating all bits~$x_i$ whose index~$i$ is in~$B$.
The \emph{block sensitivity} $bs(f,x)$ of $f$ at input~$x$, is the maximal integer~$k$ such that there exist disjoint sets $B_1,\ldots,B_k$ satisfying $f(x)\neq f(x^{B_i})$ for all $i\in[k]$.
The \emph{block sensitivity} of $f$ is $bs(f)=\max_x bs(f,x)$.
Nisan~\cite{nisan:pram&dt} proved that 
\begin{equation}\label{eq:cbs2}
bs(f)\leq C(f)\leq bs(f)^2.
\end{equation}
Via a standard reduction~\cite{nisan&szegedy:degree}, Zalka's $\Theta(\sqrt{n/p})$ bound for OR implies:

\begin{theorem}\label{th:bsplowerbound}
For every $f:\01^n\rightarrow\01$, $Q^{p\parallel}(f)=\Omega(\sqrt{bs(f)/p}).$
\end{theorem}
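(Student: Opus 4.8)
The plan is to reduce the $b$-bit search problem (distinguishing the all-zeros input from inputs of Hamming weight~$1$) to computing~$f$, where $b=bs(f)$, and then to invoke the $\Omega(\sqrt{b/p})$ lower bound for $\OR$ established in Section~\ref{ssec:paralleladversary} (equivalently, Zalka's bound). First I would fix an input $x\in\01^n$ together with disjoint blocks $B_1,\dots,B_b\subseteq[n]$ witnessing $bs(f)=b$, so that $f(x)\neq f(x^{B_i})$ for every $i\in[b]$. Replacing $f$ by $1-f$ if necessary (which leaves $Q^{p\parallel}$ unchanged), I may assume $f(x)=0$, and hence $f(x^{B_i})=1$ for all~$i$.

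Next I would set up the embedding. For $z\in\01^b$ let $B(z)=\bigcup_{i:z_i=1}B_i$ and consider the input $x^{B(z)}$. On the search-relevant inputs we have $f(x^{B(0^b)})=f(x)=0$ and $f(x^{B(e_i)})=f(x^{B_i})=1$, so $f(x^{B(z)})=\OR(z)$ whenever $|z|\le 1$. Thus any algorithm computing~$f$ solves the $b$-bit search problem, provided its queries to $x^{B(z)}$ can be simulated by queries to~$z$.

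The key point, and the one place where disjointness of the blocks is essential, is that this simulation is $p$-parallel-preserving. Each index $\ell\in[n]$ lies in at most one block, so $(x^{B(z)})_\ell = x_\ell\oplus z_{i(\ell)}$, where $i(\ell)$ is the unique block containing~$\ell$ (and $(x^{B(z)})_\ell=x_\ell$ if $\ell$ lies in no block). Since $x$ and the map $\ell\mapsto i(\ell)$ are input-independent, a single query to $x^{B(z)}$ at position~$\ell$ is realized by one query to~$z$ at position $i(\ell)$ (using the $i=0$ convention to query nothing when $\ell$ is unblocked), followed by adding the known bit~$x_\ell$. Running these in parallel, one $p$-parallel query to $x^{B(z)}$ costs exactly one $p$-parallel query to~$z$. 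Hence a $T$-query $p$-parallel algorithm for~$f$ yields a $T$-query $p$-parallel algorithm for the $b$-bit search problem.

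Finally I would conclude: the $b$-bit search problem has $p$-parallel complexity $\Omega(\sqrt{b/p})$, which is precisely the bound proved for $\OR$ in Section~\ref{ssec:paralleladversary} via the all-ones adversary matrix whose row is indexed by $0^b$ and whose columns are the weight-$1$ inputs (equivalently Zalka's bound). Therefore $T=Q^{p\parallel}(f)=\Omega(\sqrt{b/p})=\Omega(\sqrt{bs(f)/p})$. I expect the only real subtlety to be the verification that the simulation preserves the $p$-parallel structure, i.e.\ that one wide query maps to one wide query, which hinges entirely on the blocks being pairwise disjoint; note that the agreement $f(x^{B(z)})=\OR(z)$ is needed only on inputs of weight $\le 1$, which is exactly what the $\OR$ adversary lower bound uses.
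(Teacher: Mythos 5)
Your proposal is correct and is exactly the argument the paper intends: the paper's proof is a one-line citation of the standard block-sensitivity reduction of Nisan--Szegedy combined with Zalka's $\Omega(\sqrt{n/p})$ bound for $\OR$ (re-derived via the adversary matrix in Section~\ref{ssec:paralleladversary}), and you have simply spelled out that reduction, including the correct observations that disjointness of the blocks makes the simulation map one $p$-parallel query to one $p$-parallel query and that the adversary bound already applies to the weight-$\le 1$ promise version of $\OR$.
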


\noindent We now prove a general upper bound on deterministic $p$-parallel complexity:

\begin{theorem}\label{th:DCbs}
For every $f:\01^n\rightarrow\01$,
$D^{p\parallel}(f) \leq \ceil{C^{(1)}(f)/p}bs(f).$
\end{theorem}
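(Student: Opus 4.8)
The plan is to generalize the standard sequential bound $D(f)\le bs(f)\cdot C^{(1)}(f)$ (the $p=1$ case of the claim) by running exactly the same adaptive certificate-querying decision tree, but batching the queries of each round into blocks of size~$p$. Concretely, I would maintain a partial assignment recording the values of all bits queried so far, initially empty, and proceed in rounds. At the start of a round, I check whether there exists an input~$y$ consistent with the current partial assignment having $f(y)=1$. If no such~$y$ exists, then every consistent input—including the true input~$x$—has $f=0$, so I halt and output~$0$. Otherwise I take a minimal $1$-certificate for~$y$, which has size at most $C^{(1)}(f)$, and query all of its bits not yet known. Querying at most $C^{(1)}(f)$ new bits in the $p$-parallel model costs $\ceil{C^{(1)}(f)/p}$ parallel query steps. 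If after this round the true input agrees with the queried certificate, then $f(x)=1$ and I output~$1$; otherwise I proceed to the next round with the enlarged partial assignment.

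For correctness, an output of~$1$ is justified because a full $1$-certificate has been confirmed on~$x$, and an output of~$0$ is justified because the nonexistence of a consistent $1$-input forces $f(x)=0$. The crux is bounding the number of rounds by $bs(f)$, which then yields total cost $\ceil{C^{(1)}(f)/p}\cdot bs(f)$. Fix the true input~$x$ and assume $f(x)=0$ (the $f(x)=1$ case only terminates sooner). In round~$t$ let $S_t$ be the certificate's index set and $B_t\subseteq S_t$ the set of positions in $S_t$ on which $x$ disagrees with the chosen $1$-input $y_t$; since $x$ fails the certificate, $B_t\neq\emptyset$. Flipping exactly the bits of $B_t$ makes~$x$ agree with the certificate on all of~$S_t$, so $f(x^{B_t})=1\neq f(x)$, meaning each $B_t$ is a sensitive block of~$x$. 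Moreover the blocks are pairwise disjoint: once $S_t$ has been queried, every later $y_{t'}$ must be consistent with~$x$ on $S_t$, so its disagreement set $B_{t'}$ is disjoint from $S_t\supseteq B_t$. Thus the rounds produce disjoint sensitive blocks, of which there can be at most $bs(f,x)\le bs(f)$, bounding the round count.

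The main obstacle is precisely this disjointness-plus-sensitivity argument, and it is essentially unchanged from the sequential setting: the disjointness hinges on the fact that queried bits remain fixed and future certificates are chosen consistent with them, while the sensitivity of each $B_t$ follows from $B_t$ being the minimal correction turning $x$ into a certificate-satisfying input. Since parallelism affects only the per-round querying cost—replacing the sequential cost $C^{(1)}(f)$ of a single round by $\ceil{C^{(1)}(f)/p}$ parallel steps—and leaves the round-counting argument intact, the bound $D^{p\parallel}(f)\le\ceil{C^{(1)}(f)/p}\,bs(f)$ follows immediately.
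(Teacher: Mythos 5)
Your parallelization step is exactly the paper's (the paper simply invokes Lemma~5.3 of Beals et al.\ and notes each round costs $\ceil{C^{(1)}(f)/p}$ $p$-parallel steps), but your reconstruction of the round-counting argument has a genuine gap in the $f(x)=1$ case, which you dismiss with the unsupported parenthetical ``the $f(x)=1$ case only terminates sooner.'' Your sensitivity argument is specific to $0$-inputs: when $f(x)=1$, the correction set $B_t$ satisfies $f(x^{B_t})=1=f(x)$, so the $B_t$ are \emph{not} sensitive blocks of $x$, and nothing in your proof bounds the number of rounds on a $1$-input. Concretely, your algorithm's halting rules are ``certificate confirmed $\Rightarrow$ output $1$'' and ``no consistent $1$-input $\Rightarrow$ output $0$''; on a $1$-input the second rule never fires, and the first can keep failing even after the queried values already force $f=1$ (the restriction can become constant~$1$ while each round's chosen minimal certificate for some consistent $y$ still disagrees with $x$ on unqueried positions — inclusion-minimality of a certificate for $y$ is witnessed by $0$-inputs that need not be consistent with the partial assignment, so such ``trap'' certificates are not excluded). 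In that regime your disjointness argument still produces disjoint blocks, but they are sensitive for no input you have exhibited, so the round count is simply unbounded by your argument.

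The missing ingredient is precisely what the cited Lemma~5.3 of Beals et al.\ supplies: either cap the loop at $bs(f)$ iterations and prove that the restricted function is then constant, or equivalently add the symmetric halting rule ``if no consistent $0$-input remains, output $1$.'' The proof of constancy runs your block construction relative to a $0$-input $y$ that is still consistent, rather than relative to $x$: each failed round's certificate $C_t$ disagrees with $y$ on a nonempty set (since $f(y)=0$), these sets are pairwise disjoint by your argument (later certificates agree with the queried values, hence with $y$, on earlier certificate positions), and flipping one turns $y$ into an input matching a $1$-certificate, so each is a sensitive block of $y$. Thus if after $bs(f)$ failed rounds there were still both a consistent $0$-input $y$ and a consistent $1$-certificate to pick, one would get $bs(f)+1$ disjoint sensitive blocks of $y$, a contradiction; hence within $bs(f)$ query-rounds either all consistent inputs are $0$-inputs or all are $1$-inputs, and the algorithm halts correctly on \emph{both} $0$- and $1$-inputs. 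With this repair (and your correct observation that each round costs $\ceil{C^{(1)}(f)/p}$ $p$-parallel queries) the bound $D^{p\parallel}(f)\leq\ceil{C^{(1)}(f)/p}\,bs(f)$ follows as in the paper.
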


\begin{proof}
Beals et al.~\cite[Lemma~5.3]{bbcmw:polynomialsj} give a deterministic decision tree for~$f$ that runs for at most $bs(f)$ rounds, and in each round queries all variables of a 1-certificate, substituting their values into the function. They show this reduces the function to a constant. By parallelizing the querying of the certificate we can implement every round using $\ceil{C^{(1)}(f)/p}$ $p$-parallel steps.
\end{proof}

\noindent
$D^{p\parallel}(f)$ and $Q^{p\parallel}(f)$ are polynomially related if $p$ is not too big:

\begin{theorem}\label{th:DvsQparallel}
For every $f:\01^n\rightarrow\01$, $c>1$, $p\leq bs(f)^{1/c}$, we have
$\displaystyle D^{p\parallel}(f)\leq O(Q^{p\parallel}(f)^{6+4/(c-1)}).$
\end{theorem}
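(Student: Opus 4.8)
The goal is to bound $D^{p\parallel}(f)$ from above by a polynomial in $Q^{p\parallel}(f)$, under the hypothesis $p\leq bs(f)^{1/c}$. The plan is to chain together the three structural ingredients already established, each of which relates one of the relevant complexity measures to $bs(f)$ and $C^{(1)}(f)$. Specifically, Theorem~\ref{th:DCbs} gives the upper bound $D^{p\parallel}(f)\leq\ceil{C^{(1)}(f)/p}\,bs(f)$, and Theorem~\ref{th:bsplowerbound} gives the lower bound $Q^{p\parallel}(f)=\Omega(\sqrt{bs(f)/p})$. So the entire argument reduces to re-expressing the right-hand side of Theorem~\ref{th:DCbs} in terms of $Q^{p\parallel}(f)$ using Theorem~\ref{th:bsplowerbound} together with the classical polynomial relations among $C^{(1)}(f)$, $C(f)$, and $bs(f)$.

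First I would control $\ceil{C^{(1)}(f)/p}$. Since $C^{(1)}(f)\leq C(f)\leq bs(f)^2$ by Nisan's inequality~\eqref{eq:cbs2}, and using $\ceil{a/p}\leq a/p + 1 \leq 2\max(a/p,1)$, we get $\ceil{C^{(1)}(f)/p}=O(bs(f)^2/p + 1)$. Plugging into Theorem~\ref{th:DCbs},
$$
D^{p\parallel}(f)=O\!\left(\frac{bs(f)^3}{p}+bs(f)\right)=O\!\left(\frac{bs(f)^3}{p}\right),
$$
where the last simplification uses that $bs(f)^3/p \geq bs(f)$ whenever $p\leq bs(f)^2$, which certainly holds under our hypothesis $p\leq bs(f)^{1/c}$. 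Next I would invert Theorem~\ref{th:bsplowerbound}: from $Q^{p\parallel}(f)=\Omega(\sqrt{bs(f)/p})$ we get $bs(f)=O(p\cdot Q^{p\parallel}(f)^2)$. Substituting this once into the numerator is wasteful; instead the cleaner route is to write $bs(f)^3/p = (bs(f)/p)^3 \cdot p^2$ and bound $bs(f)/p = O(Q^{p\parallel}(f)^2)$, yielding $D^{p\parallel}(f)=O(p^2\, Q^{p\parallel}(f)^6)$.

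The remaining obstacle, and the one place the hypothesis $p\leq bs(f)^{1/c}$ is actually used, is to absorb the stray factor $p^2$ into a power of $Q^{p\parallel}(f)$. This is where I expect the bookkeeping to be most delicate. The hypothesis gives $p\leq bs(f)^{1/c}$; combined with $bs(f)=O(p\,Q^{p\parallel}(f)^2)$ one solves for $p$ in terms of $Q^{p\parallel}(f)$: substituting $bs(f)^{1/c}$ for the allowed range and using $p^{c}\leq bs(f)=O(p\,Q^{p\parallel}(f)^2)$ gives $p^{c-1}=O(Q^{p\parallel}(f)^2)$, hence $p=O(Q^{p\parallel}(f)^{2/(c-1)})$ and $p^2=O(Q^{p\parallel}(f)^{4/(c-1)})$. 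Feeding this back in yields
$$
D^{p\parallel}(f)=O\!\left(Q^{p\parallel}(f)^{6}\cdot Q^{p\parallel}(f)^{4/(c-1)}\right)=O\!\left(Q^{p\parallel}(f)^{6+4/(c-1)}\right),
$$
which is exactly the claimed bound. The main thing to watch is that the chain of inequalities extracting $p=O(Q^{p\parallel}(f)^{2/(c-1)})$ requires $c>1$ (so that the exponent $c-1$ is positive and the inequality can be inverted), matching the hypothesis, and that one is consistent about which direction the $\Omega$ and $O$ point in each substitution. These are all elementary manipulations, so once the substitutions are lined up correctly the result follows immediately.
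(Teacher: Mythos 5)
Your proof is correct and follows essentially the same route as the paper's: combine Theorem~\ref{th:DCbs} with Nisan's bound $C^{(1)}(f)\leq C(f)\leq bs(f)^2$ to get $D^{p\parallel}(f)=O(bs(f)^3/p)=O((bs(f)/p)^3p^2)$, then absorb the $p^2$ via the hypothesis (your $p^{c-1}=O(Q^{p\parallel}(f)^2)$ is just the paper's rewriting $p\leq (bs(f)/p)^{1/(c-1)}$ with Theorem~\ref{th:bsplowerbound} substituted in earlier), and conclude. The only cosmetic differences are that you handle the ceiling via an additive $+1$ term instead of the paper's WLOG assumption $C(f)=C^{(1)}(f)$, and that you convert to $Q^{p\parallel}(f)$ one step sooner; both are fine.
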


\begin{proof}
We can assume $C(f)=C^{(1)}(f)$ (else consider~$1-f$).
By Eq.~\eqref{eq:cbs2} we have $p\leq bs(f)^{1/c}\leq C^{(1)}(f)$. 
We also have $C^{(1)}(f) \leq bs(f)^2$.
The assumption on $p$ is equivalent to $p\leq (bs(f)/p)^{1/(c-1)}$. 
Using Theorems~\ref{th:bsplowerbound} and~\ref{th:DCbs}, we obtain
\begin{align*}
D^{p\parallel}(f) & \leq  \ceil{C^{(1)}(f)/p}bs(f) \leq O(bs(f)^3/p)=O((bs(f)/p)^3 p^2)\\
& \leq  O((bs(f)/p)^{3+2/(c-1)})\leq O(Q^{p\parallel}(f)^{6+4/(c-1)}). 
\end{align*}
\end{proof}

For example, if $p\leq bs(f)^{1/3}$ then $Q^{p\parallel}(f)$ is at most an 8th power smaller than $D^{p\parallel}(f)$. Whether superpolynomial gaps exist for large~$p$ remains open.

We end with an observation about random functions.
Van Dam~\cite{dam:oracle} showed that an $n$-bit input string~$x$ can be recovered with high probability using $n/2+O(\sqrt{n})$ quantum queries, hence $Q(f)\leq n/2+O(\sqrt{n})$ for all $f:\01^n\rightarrow\01$.
His algorithm already applies its queries in parallel, so allows us to compute~$x$ using roughly $n/2p$ $p$-parallel quantum queries (see Appendix~\ref{app:parallelinter}).
Ambainis et al.~\cite{absw:nover2} proved an essentially optimal lower bound for random functions: almost all~$f$ have $Q(f)\geq (1/2-o(1))n$. Since trivially $Q(f)\leq pQ^{p\parallel}(f)$, we obtain the $p$-parallel lower bound $Q^{p\parallel}(f)\geq (1/2-o(1))n/p$ for almost all~$f$. 
So for this type of ``quantum oracle interrogation,'' parallelization gives the optimal factor-$p$ speed-up for almost all Boolean functions.

\begin{corollary}
For all $p\leq n$, almost all $f:\01^n\rightarrow\01$ satisfy $Q^{p\parallel}(f)=(1/2\pm o(1))n/p$.
\end{corollary}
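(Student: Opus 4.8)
The plan is to prove the matching upper and lower bounds separately, observing that the upper bound in fact holds for \emph{every} $f$ and that only the lower bound needs the ``almost all'' restriction; the two then pinch $Q^{p\parallel}(f)$ to $(1/2\pm o(1))n/p$.

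For the upper bound I would parallelize van Dam's oracle-interrogation algorithm. Van Dam recovers the entire input string $x$ with high probability using $n/2+O(\sqrt{n})$ queries, and those queries are applied nonadaptively, so they can be grouped into batches of $p$ and each batch executed with a single $p$-parallel query. This recovers $x$ using $\ceil{(n/2+O(\sqrt{n}))/p}=(1/2+o(1))n/p$ $p$-parallel queries (the content deferred to Appendix~\ref{app:parallelinter}). Since all input-independent post-processing is free in the query model, once $x$ is in hand we output $f(x)$ at no extra cost, giving $Q^{p\parallel}(f)\leq (1/2+o(1))n/p$ for all $f$ and all $p\leq n$.

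For the lower bound I would combine two ingredients already available. First, the sequential bound of Ambainis et al., that almost all $f$ satisfy $Q(f)\geq(1/2-o(1))n$. Second, the trivial simulation inequality $Q(f)\leq p\,Q^{p\parallel}(f)$, obtained by serializing each $p$-parallel query into $p$ ordinary queries. Dividing the first by $p$ and invoking the second gives $Q^{p\parallel}(f)\geq Q(f)/p\geq(1/2-o(1))n/p$ for almost all $f$, and combining with the upper bound completes the argument.

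The only genuinely non-immediate step — and hence the main obstacle — is the parallelization of van Dam's interrogation deferred to Appendix~\ref{app:parallelinter}: one must confirm that the interrogation queries can indeed be laid out in parallel with no loss, and that the lower-order $O(\sqrt{n})$ term together with the ceiling incurred by batching is absorbed into the $o(1)$. Everything else follows immediately from citing the two stated facts, so the body of the corollary's proof is essentially just the combination of the preceding upper and lower bounds.
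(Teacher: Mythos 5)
Your proposal is correct and matches the paper's proof essentially verbatim: the upper bound comes from parallelizing van Dam's interrogation into $\lceil T/p\rceil$ batches of $p$ (deferred to Appendix~\ref{app:parallelinter}, where the phase unitary $\ket{y}\mapsto(-1)^{x\cdot y}$ is implemented batchwise), and the lower bound combines the Ambainis et al.\ bound $Q(f)\geq(1/2-o(1))n$ for almost all $f$ with the trivial serialization $Q(f)\leq p\,Q^{p\parallel}(f)$. You also correctly identify that the only nontrivial content is the appendix parallelization, just as the paper structures it.
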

For $p=n/2+O(\sqrt{n})$, one $p$-parallel query suffices.

\section{Conclusion and future work}

This paper is the first to systematically study the power and limitations of parallelism
for quantum query algorithms. It is motivated in particular by the need to reduce overall 
computing time when running quantum algorithms on hardware with quickly decohering quantum bits.

We leave open many interesting questions for future work, for example:
\begin{itemize}
\item There are many other computational problems whose $p$-parallel complexity is unknown, for example
finding a triangle in a graph or deciding whether two given matrices multiply to a third one.  For both of these problems, however, even the sequential quantum query complexity is still open.
\item We suspect Theorem~\ref{th:DvsQparallel} is non-optimal, and
conjecture that $D^{p\parallel}(f)$ and $Q^{p\parallel}(f)$ are polynomially related for large~$p$ as well. Montanaro's result~\cite{montanaro:nonadaptiveq} about the weakness of maximally parallel quantum algorithms is evidence for this. Even for the sequential model ($p=1$) the correct bound is open; the best known bound is a 6th power~\cite{bbcmw:polynomialsj} but the correct power may well be~2.
\item Can we find relations with quantum communication complexity? Nonadaptive quantum query algorithms induce one-way communication protocols, while fully adaptive ones induce protocols that are very interactive. Our $p$-parallel algorithms would sit somewhere in between.
\end{itemize}

\noindent
{\bf Acknowledgment.}
We thank J\'er\'emie Roland for helpful discussions.

\bibliographystyle{plain}
\bibliography{qc}

\appendix

\section{Proof of Fact~\ref{optp}}\label{fact1}

We use the $\gamma_2$-norm for matrices, which is defined as 
$$
\gamma_2(A)=\min_{X,Y:A=XY} r(X)c(Y),
$$
where $r(X)$ denotes the maximum squared length among the rows of~$X$, and
$c(Y)$ denotes the maximum squared length among the columns of~$Y$.
Note that the identity and the all-1 matrix both have $\gamma_2$-norm equal to~1 (the latter by taking $X$ and $Y$ to be the all-1 row and columns, respectively). Also, $\gamma_2(A\otimes B)=\gamma_2(A)\gamma_2(B)$.
Since $\Delta_J$ can be written as the all-1 matrix of the appropriate dimensions, minus identity tensored with a smaller all-1 matrix, the triangle inequality implies $\gamma_2(\Delta_J)\leq 2$.
The $\gamma_2$-norm satisfies $\norm{A\circ B}\leq \norm{A}\gamma_2(B)$ by~\cite[Lemma~A.1]{lmrss:stateconv}.
Observe that $\Gamma\circ \Delta_J= (\Gamma\circ \Delta_K)\circ \Delta_J$. Hence we have
\begin{align*}
\norm{\Gamma\circ \Delta_J}&=\norm{(\Gamma\circ \Delta_K)\circ \Delta_J} \leq
\norm{\Gamma\circ \Delta_K}\gamma_2(\Delta_J)\leq 2\norm{\Gamma\circ \Delta_K}.
\end{align*}

\section{Proof of Lemma~\ref{lem:parallel-cert}}\label{app:parallel-cert}

The proof is a straightforward adaptation of the proof of \cite[Theorem~9]{bl11}, but we repeat it here for completeness. Let $\{w_{S,J}:(S,J)\in \mathcal{E}_p\}$ and $\{\theta_{S,J}(M):(S,J)\in\mathcal{E}_p, M\in\cal{C}\}$ be an optimal solution to the primal formulation of $\LGC^{p\parallel}(\cal{C})$. 

We will use this solution to construct a feasible solution to the dual expression of our $p$-parallel adversary of Eq.~\eqref{eq:paralleladvdef}, which is the following:
\begin{align}
\ADV^{p\parallel}(f) = & \min_{\{\ket{u_{x,J}}\}} \sqrt{\max_{x\in [q]^n}\sum_{J\in\binom{[n]}{\leq p}}\norm{\ket{u_{x,J}}}^2} & \label{eq:dual-obj} \\ 
\mbox{s.t. } &  \ket{u_{x,J}}\in \mathbb{C}^k & \mbox{ for all } x\in [q]^n,J\in \binom{[n]}{\leq p} \nonumber \\
& \sum_{J:x_J\neq y_J}\langle{u_{x,J}}\vert{u_{y,J}}\rangle=1  & \mbox{ for all }x\in f^{-1}(1), y\in f^{-1}(0) \nonumber
\end{align}
The dimension~$k$ of the vectors $\ket{u_{x,J}}$ can be anything, and is implicitly minimized over.

For each $x\in f^{-1}(1)$, let $M_x\in\cal{C}$ be such that for every $y\in [q]^n$,  $x_{M_x}=y_{M_x}$ implies $f(y)=1$. 
For every $x\in {\cal D}$ and $J\in \binom{[n]}{\leq p}$, define the following state in span$\{\ket{S}\ket{\alpha}:S\subseteq[n],\alpha\in[q]^S\}$:
$$
\ket{u_{x,J}}:=\left\{\begin{array}{ll}
\sum_{S\subseteq [n]\setminus J} \sqrt{w_{S,J}} \ket{S,x_S}& \mbox{if }f(x)=0\\
\sum_{S\subseteq [n]\setminus J} \frac{\theta_{S,J}(M_x)}{\sqrt{w_{S,J}}} \ket{S,x_S}& \mbox{if }f(x)=1\\
\end{array}\right.
$$
We now verify that $\{\ket{u_{x,J}}\}_{x,J}$ is a feasible solution to the dual formulation of $\ADV^{p\parallel}(f)$:
\begin{eqnarray}
\sum_{J\in\binom{[n]}{\leq p}:x_J\neq y_J}\inp{u_{x,J}}{u_{y,J}} & = & \sum_{J\in\binom{[n]}{\leq p}:x_J\neq y_J}\sum_{S\subseteq [n]\setminus J:x_S=y_S}\frac{\theta_{S,J}(M_x)}{\sqrt{w_{S,J}}}\sqrt{w_{S,J}}\\
&=& \sum_{S\subseteq[n]:x_S=y_S}\sum_{J\in \binom{[n]\setminus S}{\leq p}: x_J\neq y_J}\theta_{S,J}(M_x).\label{eq:cut}
\end{eqnarray}
To see that this expression is equal to~$1$, we need only notice that Eq.~\eqref{eq:cut} is the sum of the flow on all edges across the cut induced by the set $\{S\subseteq [n]:x_S=y_S\}$, and the total flow across a cut is always~$1$, since $\theta(M_x)$ is a unit flow. Thus the constraint from~\eqref{eq:dual-obj} is satisfied and $\{\ket{u_{x,J}}\}_{x,J}$ is a feasible solution. 

We can now bound $\ADV^{p\parallel}(f)$ by the objective value of the feasible solution $\{\ket{u_{x,J}}\}_{x,J}$. First note that for any $x\in f^{-1}(1)$, by constraint~\eqref{eq:energy}, we have:
$$
\sum_{J\in \binom{[n]}{\leq p}}\norm{\ket{u_{x,J}}}^2 =\sum_{J\in \binom{[n]}{\leq p}}\sum_{S\subseteq [n]\setminus J}\frac{\theta_{S,J}(M_x)^2}{w_{S,J}}\leq 1.
$$
Second, for any $x\in f^{-1}(0)$ we have
$$
\sum_{J\in \binom{[n]}{\leq p}}\norm{\ket{u_{x,J}}}^2 =\sum_{J\in\binom{[n]}{\leq p}}\sum_{S\subseteq [n]\setminus J}w_{S,J}.
$$
We can therefore bound the objective value as:
\begin{eqnarray*}
\ADV^{p\parallel}(f)&\leq & \sqrt{\max_{x\in [q]^n}\sum_{J\in\binom{[n]}{\leq p}}\norm{\ket{u_{x,J}}}^2} 
\leq \sqrt{\max\left\{1,\sum_{J\in\binom{[n]}{\leq p}}\sum_{S\subseteq [n]\setminus J}w_{S,J}\right\}}\\
&\leq & \sqrt{\sum_{e\in {\cal E}_p}w_e} = \LGC^{p\parallel}(\cal{C}),
\end{eqnarray*}
where $\sum_e w_e\geq 1$ follows from the Cauchy-Schwarz inequality and the fact that $\sum_e\frac{\theta_e(M_x)^2}{w_e}\leq 1$, as follows:
$$
1=(\sum_e \theta_e(M_x))^2=\left(\sum_e \frac{\theta_e(M_x)}{\sqrt{w_e}}\right)\leq \sum_e \frac{\theta_e(M_x)^2}{w_e}\sum_e w_e\leq \sum_e w_e.
$$

\section{Proof of Theorem~\ref{thq=p-lgc}}\label{app:GammaDeltaJ}

For the upper bound, we immediately obtain $Q^{p\parallel}(f)=O(\LGC^{p\parallel}(\mathcal{C}))$ by Theorem \ref{thm:par-adv} and Lemma~\ref{lem:parallel-cert}.


For the lower bound, our proof will be similar to that of~\cite[Theorem~5]{belovs:nonadaptive}, and we will omit parts of the proof that are identical to theirs. Just as in~\cite[Theorem~5]{belovs:nonadaptive}, our proof will start with an optimal feasible solution $\{\alpha_S(M)\}_{M\in{\cal C}, S\subseteq [n]}$ to the dual~\eqref{eqlgcdual}.
{Therefore $\LGC^{p\parallel}(\mathcal{C})=\sqrt{\sum_{M\in{\cal C}} \alpha_{\emptyset}(M)^2}$
and moreover $\sum_{M\in{\cal C}} (\alpha_S(M)-\alpha_{S\cup J}(M))^2\leq 1$.}
Then we will construct an adversary matrix $\Gamma$ for $f$ such that
$\norm{\Gamma}\geq\sqrt{\frac{1}{2}\sum_{M\in{\cal C}} \alpha_{\emptyset}(M)^2}$ (as proven in~\cite{belovs:nonadaptive}) and for every $J\subseteq [n]$, 
$$
\norm{\Gamma\circ\Delta_J}\leq 2\max_{S\subseteq [n]} \sqrt{\sum_{M\in{\cal C}} (\alpha_S(M)-\alpha_{S\cup J}(M))^2}$$ 
(proven below).
{Then Theorem~\ref{thm:par-adv} will imply $Q^{p\parallel}(f)=\Omega(\sqrt{\sum_{M\in{\cal C}} \alpha_{\emptyset}(M)^2})=\Omega(\LGC^{p\parallel}(\mathcal{C}))$.}


First we use a variation of the adversary bound from~\cite{belovs&spalek:ksum} that allows the duplication of row and column indices.
Concretely, rows and columns of $\Gamma$ are now indexed by $(x,a)$ and $(y,b)$, respectively, where $x\in f^{-1}(1)$, $y\in f^{-1}(0)$, and $a$ and $b$ belong to some finite sets.  
Then, with slight abuse of notation, $\Delta_j$ is now defined such that $\Delta_j[(x,a),(y,b)]=1$ if $x_j\neq y_j$, and $\Delta_j[(x,a),(y,b)]=0$ otherwise. Specifically, in our case rows of $\Gamma$ will be indexed by $(x,M)$ for some $x\in f^{-1}(1)$ and $M\in\cal C$, and columns will simply be indexed by $y\in f^{-1}(0)$.

Second, $\Gamma$ will be the submatrix of a larger matrix $\widetilde{\Gamma}$ (defined below), whose rows are indexed by the elements of $[q]^n\times\mathcal{C}$ and whose columns are indexed by $[q]^n$.
Then $\Delta_j$ is naturally extended to all $x,y\in[q]^n$ and $M\in\mathcal{C}$ by $\tilde{\Delta}_j[(x,M),y]=1$ if $x_j\neq y_j$, and $\tilde{\Delta}_j[(x,M),y]=0$ otherwise. Since $\Gamma\circ\Delta_J$ is a submatrix of $\widetilde{\Gamma}\circ\tilde{\Delta}_J$, we will have $\norm{\Gamma\circ\Delta_J}\leq\norm{\widetilde{\Gamma}\circ\tilde{\Delta}_J}$. Hence we only need to upper bound the latter norm.

We now define $\widetilde{\Gamma}$.
Consider the Hilbert space $\mathbb{C}^q$. Let $E_0$ denote the orthogonal projector onto the vector
$\frac{1}{\sqrt{q}}(1,1,\ldots,1)$, and $E_1=\mathrm{I}-E_0$ its orthogonal complement.
For every $S\subseteq[n]$, let $E_S=\otimes_{j\in[n]}E_{s_j}$
{(acting on $\mathbb{C}^{q^n}$)}, where $s_j=1$ if $j\in S$, and $s_j=0$ otherwise.
Note that $E_SE_{S'}=E_S$ if $S=S'$, and $E_SE_{S'}=0$ otherwise.
Define $\widetilde{\Gamma}$ as
$$
\widetilde{\Gamma}=\left[\begin{array}{c} \vdots \\ G_M \\ \vdots \end{array}\right]_{M\in\mathcal{C}},\quad\text{with }
G_M=\sum_{S\subseteq [n]}\alpha_S(M) E_S,
$$ 
where the $\alpha_S(M)$ come from a feasible solution to the dual~\eqref{eqlgcdual}. We then define $\Gamma$ as the submatrix of $\widetilde{\Gamma}$ obtained by keeping only those columns indexed by $y$ such that $f(y)=0$; and only those rows indexed by $(x,M)$ such that $M$ is a $1$-certificate for $x$ (i.e., for all $z\in [q]^n$, $z_M=x_M$ implies $f(z)=1$). 
{\begin{fact}
$\displaystyle
\norm{\Gamma}\geq\sqrt{\frac{1}{2}\sum_{M\in{\cal C}} \alpha_{\emptyset}(M)^2}.$
\end{fact}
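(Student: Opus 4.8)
The plan is to lower-bound the operator norm $\norm{\Gamma}$ by evaluating the quadratic form of $\Gamma$ (equivalently of $\Gamma\Gamma^*$ or $\Gamma^*\Gamma$) on a single well-chosen test vector, and to track exactly where the submatrix restriction defining $\Gamma$ from $\widetilde{\Gamma}$ can cost us. The guiding vector is the uniform superposition $\ket{\bar{0}}=\frac{1}{\sqrt{q^n}}\sum_{x}\ket{x}$: since $E_S\ket{\bar{0}}=0$ for every $S\neq\emptyset$ and $E_\emptyset\ket{\bar{0}}=\ket{\bar{0}}$, each block of $\widetilde{\Gamma}$ acts on it as $G_M\ket{\bar{0}}=\alpha_\emptyset(M)\ket{\bar{0}}$. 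Hence $\widetilde{\Gamma}\ket{\bar{0}}=\bigoplus_{M\in\mathcal{C}}\alpha_\emptyset(M)\ket{\bar{0}}$ has squared norm exactly $\sum_{M}\alpha_\emptyset(M)^2$, so $\norm{\widetilde{\Gamma}}\ge\sqrt{\sum_M\alpha_\emptyset(M)^2}$. The whole task is to show that passing from $\widetilde{\Gamma}$ to its submatrix $\Gamma$ loses at most a factor $\sqrt{2}$.

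Next I would isolate the two restrictions separately. The column restriction, keeping only $y$ with $f(y)=0$, is cheap and is exactly what produces the factor $\tfrac12$: because each $T_M$ is an orthogonal array of length $|M|$ we have $|T_M|=q^{|M|-1}$, so $\Pr_x[x_M\in T_M]=1/q$, and a union bound over $\mathcal{C}$ together with $q\ge 2|\mathcal{C}|$ gives $\Pr_x[f(x)=1]\le|\mathcal{C}|/q\le\tfrac12$. Thus at least half of all inputs are $0$-inputs, and the uniform vector over $f^{-1}(0)$ retains squared-overlap $\ge\tfrac12$ with $\ket{\bar{0}}$ on the column side. I would therefore take the column test vector to be the uniform superposition over $f^{-1}(0)$ and aim to show its image under $\Gamma$ still has squared norm at least $\tfrac12\sum_M\alpha_\emptyset(M)^2$.

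The technically delicate step, and the place I expect to spend all the effort, is the row restriction: $\Gamma$ keeps only rows $(x,M)$ with $M$ a $1$-certificate for $x$, i.e.\ $x_M\in T_M$, which is only a $1/q$ fraction of all $x$ in each block. The naive left vector $\widetilde{\Gamma}\ket{\bar{0}}$ is flat over all $x$, so projecting it onto the certified rows would throw away all but a $1/q$ fraction of its mass; a crude bound here would lose a factor of $q$ rather than a constant. Avoiding this loss is precisely where the orthogonal-array structure must be used: writing $R_M=\Pi_M\otimes I$ for the projector onto $\{x:x_M\in T_M\}$ (with $\Pi_M$ the projector onto $\mathrm{span}(T_M)$), the strength-$(|M|-1)$ property of $T_M$ controls how the eigenspaces $E_S$ of $G_M$ sit relative to $\mathrm{range}(R_M)$, and one exploits the constraint $\alpha_S(M)=0$ for $M\subseteq S$. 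Quantitatively I would expand $\Gamma\ket{v}=\bigoplus_M R_M G_M\ket{v}$ for $\ket{v}$ the uniform $0$-input vector, decompose $\ket{v}$ into its $\ket{\bar{0}}$-component and its orthogonal part (which encodes the deviation of $f^{-1}(0)$ from all inputs), and show the surviving mass adds up to at least $\tfrac12\sum_M\alpha_\emptyset(M)^2$.

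This last estimate is an adaptation of the norm lower bound in the proof of \cite[Theorem~5]{belovs:nonadaptive} to the present (row-duplicated) adversary matrix; as the statement of the Fact indicates, it is carried out there, so I would import its orthogonal-array bookkeeping rather than redo it. Combining the three ingredients — the clean value $\sum_M\alpha_\emptyset(M)^2$ for $\widetilde{\Gamma}\ket{\bar{0}}$, the factor-$\tfrac12$ coming from the $0$-input fraction, and the constant-factor survival of the certified-row restriction — yields $\norm{\Gamma}\ge\sqrt{\tfrac12\sum_M\alpha_\emptyset(M)^2}$, as claimed. The main obstacle is squarely the row restriction and its reliance on the fine structure of the orthogonal arrays.
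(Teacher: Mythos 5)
Your proposal is correct and takes essentially the same route as the paper: the paper's entire proof of this Fact is a citation of Belovs--Rosmanis~\cite[Lemma~17]{belovs:nonadaptive}, checking that its two hypotheses --- $\alpha_S(M)=0$ whenever $M\subseteq S$, and an orthogonal array $T_M$ attached to each $M\in{\cal C}$ --- are satisfied by the matrix built from the parallel dual solution, which is exactly where you land when you ``import the orthogonal-array bookkeeping'' for the certified-row restriction. Your surrounding computations are sound (the value $\sum_{M\in{\cal C}}\alpha_\emptyset(M)^2$ of $\norm{\widetilde{\Gamma}\ket{\bar{0}}}^2$, and $\Pr[f(x)=1]\leq|{\cal C}|/q\leq 1/2$ via the union bound), though note that the exact constant $\tfrac{1}{2}$ in the Fact emerges from the cited lemma as a whole rather than from your proposed column-factor/row-factor split, which by itself would only guarantee some unspecified constant.
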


\begin{proof}
Belovs and Rosmanis~\cite[Lemma~17]{belovs:nonadaptive} prove this result for any matrix $\Gamma$ constructed as above assuming that, for each $M\in{\cal C}$,
(1)  $\alpha_S(M)=0$ whenever $M\subseteq S$, and (2) $M$ is equipped with an orthogonal array $T_M$ of length $|M|$. Those two assumptions are satisfied in our case too.
\end{proof}}

Upper bounding $\norm{\widetilde{\Gamma}\circ\tilde{\Delta}_J}$ requires some additional steps compared to~\cite{belovs:nonadaptive}.
We first review the approach of~\cite{belovs:nonadaptive}, which is for the special case $J=\{j\}$. 
Define a linear map $\varphi_j$ on matrix $\widetilde{\Gamma}$ by its action on blocks $E_S$, for every $S\subseteq[n]$.
First, let $\varphi$ be such that $\varphi(E_0)=E_0$ and $\varphi(E_1)=-E_0$.
Then $\varphi_j(E_S)=
E_{s_1}\otimes\ldots \otimes E_{s_{j-1}} \otimes \varphi(E_{s_j}) \otimes E_{s_{j+1}}\otimes\ldots\otimes E_{s_n}$.
An alternative but equivalent definition is
$$\varphi_j(E_S)=\begin{cases}
E_S,&\text{ if $j\not\in S$;}\\
\displaystyle-E_{S\setminus\{j\}}&\text{ otherwise.}
\end{cases}$$
The map $\varphi_j$ was introduced because it satisfies
$E_S\circ\Delta_{j} =\varphi_j(E_S)\circ\Delta_j$. This comes from the observation that
$\varphi(E_1)\circ \Delta_1=E_1\circ \Delta_1$, since $E_1=\mathrm{I}-E_0$ and
$\mathrm{I}\circ \Delta_1=0$.
The approach of~\cite{belovs:nonadaptive} then consists of
applying $\varphi_j$ to $\widetilde{\Gamma}$ before computing the norm of $\widetilde{\Gamma}\circ\tilde{\Delta}_j$. 

We now generalize $\varphi_j$ to subsets $J\subseteq [n]$ as
$$
\varphi_J(E_S)=\begin{cases}
E_S,&\text{ if $J\not\subseteq S$;}\\
\displaystyle-\sum_{S':S\setminus J \subseteq S'\subsetneq S} E_{S'},&\text{ otherwise.}
\end{cases}
$$
Then $\varphi_j$ satisfies the following fact, which is
an extension of the case $J=\{j\}$ (proved in~\cite{belovs:nonadaptive}). 
\begin{fact}\label{fb}
Let $J\subseteq [n]$ be any subset.
Then $ \widetilde{\Gamma}\circ\tilde{\Delta}_J = \varphi_J(\widetilde{\Gamma})\circ\tilde{\Delta}_J$.
\end{fact}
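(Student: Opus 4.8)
The plan is to reduce this block-matrix identity to a single-block statement and then verify that statement entrywise. Since $\varphi_J$ is linear and acts on $\widetilde{\Gamma}=[\,\cdots\,G_M\,\cdots\,]$ only through its blocks $G_M=\sum_{S\subseteq[n]}\alpha_S(M)E_S$, and since the entries of $\tilde{\Delta}_J$ are independent of the block index $M$, I would first observe that the two sides of the Fact agree block-by-block as soon as we prove the atomic identity
\begin{equation}\label{eq:atomic}
E_S\circ\Delta_J=\varphi_J(E_S)\circ\Delta_J\qquad\text{for every }S\subseteq[n],
\end{equation}
where $\Delta_J$ denotes the $[q]^n\times[q]^n$ matrix with $\Delta_J[x,y]=1$ iff $x_j\neq y_j$ for some $j\in J$. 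Equivalently, it suffices to show that $E_S-\varphi_J(E_S)$ is supported entrywise only on pairs $(x,y)$ with $x_J=y_J$, since $\Delta_J$ vanishes exactly on those pairs.

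\noindent The case $J\not\subseteq S$ is immediate, as then $\varphi_J(E_S)=E_S$. For $J\subseteq S$ I would first rewrite $\varphi_J(E_S)=-\sum_{S\setminus J\subseteq S'\subsetneq S}E_{S'}=-\sum_{A\subsetneq J}E_{(S\setminus J)\cup A}$, using that the sets $S'$ with $S\setminus J\subseteq S'\subsetneq S$ are exactly those of the form $(S\setminus J)\cup A$ with $A\subsetneq J$. I would then pass to the explicit matrix entries $E_0[a,b]=1/q$ and $E_1[a,b]=\mathbf{1}[a=b]-1/q$, which give $E_R[x,y]=\prod_{j\in R}(\mathbf{1}[x_j=y_j]-1/q)\prod_{j\notin R}(1/q)$ for every $R\subseteq[n]$.

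\noindent Factoring out the coordinates outside $J$, which are common to $E_S$ and to every $E_{(S\setminus J)\cup A}$, the entry $(E_S-\varphi_J(E_S))[x,y]$ equals that common factor times the complete sum $\sum_{A\subseteq J}\prod_{j\in A}(\mathbf{1}[x_j=y_j]-1/q)\prod_{j\in J\setminus A}(1/q)$, in which the $A=J$ term supplies $E_S$ and the terms $A\subsetneq J$ supply $-\varphi_J(E_S)$. This sum factorizes as a product of binomials, $\prod_{j\in J}[(\mathbf{1}[x_j=y_j]-1/q)+1/q]=\prod_{j\in J}\mathbf{1}[x_j=y_j]=\mathbf{1}[x_J=y_J]$, which is $0$ whenever $x_J\neq y_J$. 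As that is exactly the support of $\Delta_J$, the Hadamard product $(E_S-\varphi_J(E_S))\circ\Delta_J$ vanishes, establishing \eqref{eq:atomic} and hence the Fact.

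\noindent I expect the only genuine step to be this recombination: pulling the single term $A=J$ (namely $E_S$ itself) together with the $A\subsetneq J$ terms of $-\varphi_J(E_S)$ into one sum over all $A\subseteq J$, so that it collapses to the ``equal on $J$'' indicator. This is precisely why $\varphi_J$ is defined to sum $E_{S'}$ over all $S\setminus J\subseteq S'\subsetneq S$ with a single overall sign; everything else is routine bookkeeping with the tensor-product entries of $E_S$, generalizing the observation $\varphi(E_1)\circ\Delta_1=E_1\circ\Delta_1$ used in the case $J=\{j\}$.
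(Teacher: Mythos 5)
Your proposal is correct and takes essentially the same approach as the paper: both arguments recombine $E_S$ with the terms of $-\varphi_J(E_S)$ into the full sum over $S\setminus J\subseteq S'\subseteq S$, which collapses to the identity on the $J$-coordinates and is therefore annihilated by the Hadamard product with $\tilde{\Delta}_J$. Your entrywise identity $\prod_{j\in J}\bigl[(\mathbf{1}[x_j=y_j]-1/q)+1/q\bigr]=\mathbf{1}[x_J=y_J]$ is just the entry-level form of the paper's operator identity $(E_0+E_1)^{\otimes |J|}=\mathrm{I}^{\otimes |J|}$, so the difference is purely presentational.
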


{\begin{proof}
By linearity it suffices to prove the fact for $E_S$, i.e., that $E_S\circ \tilde{\Delta}_J = \varphi_J (E_S) \circ \tilde{\Delta}_J$, where $S,J\subseteq [n]$.
This equality is immediate when $J\not\subseteq S$, since then $\varphi_J (E_S)=E_S$.

Assume from now on that $J\subseteq S$. 
For notational simplicity, assume further and without loss of generality that $J=\{1,2,\ldots,j\}$, and set $F=E_{s_{j+1}}\otimes\ldots\otimes E_{s_n}$, hence $E_S= E_1^{\otimes^j} \otimes F$. Using $\mathrm{I}=E_0+E_1$ we have 
$$ 
\mathrm{I}^{\otimes j} =(E_0+E_1)^{\otimes^j} = \sum_{S':\emptyset \subseteq S'\subseteq J} E_{S'},
$$
where the notations $E_J$ and $E_{S'}$ stand for the first $j$ bits only.
This implies
$$
E_J= \mathrm{I}^{\otimes j}- \sum_{S':\emptyset \subseteq S'\subsetneq J} E_{S'}.
$$
This concludes the proof since 
$$
(\mathrm{I}^{\otimes j} \otimes F) \circ \tilde{\Delta}_J=0,
\quad\text{and}\quad
\varphi_J(E_S) = - \sum_{S':\emptyset \subseteq S'\subsetneq J} E_{S'}\otimes F.
$$
\end{proof}}

Therefore (using also Fact~\ref{optp}) we can upper bound $\norm{\widetilde{\Gamma}\circ\tilde{\Delta}_J}$ by $2\norm{\varphi_J(\widetilde{\Gamma})}$.
It remains to upper bound the latter norm.

{\begin{fact}
$\displaystyle\norm{\varphi_J(\widetilde{\Gamma})} = \max_{S\subseteq [n]} \sqrt{\sum_{M\in{\cal C}} (\alpha_S(M)-\alpha_{S\cup J}(M))^2}.$
\end{fact}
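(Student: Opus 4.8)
The plan is to exploit that the projectors $\{E_S\}_{S\subseteq[n]}$ are mutually orthogonal and sum to the identity on $\mathbb{C}^{q^n}$. By definition $\varphi_J$ sends each block $E_S$ to a real linear combination of the $E_{S'}$, so $\varphi_J(G_M)$ is again of the form $\sum_S c_S(M)\,E_S$ for suitable real coefficients $c_S(M)$. Consequently $\varphi_J(\widetilde{\Gamma})$, viewed as a map from $\mathbb{C}^{q^n}$ into $\bigoplus_{M\in\mathcal{C}}\mathbb{C}^{q^n}$, is block-diagonal with respect to the orthogonal decomposition $\mathbb{C}^{q^n}=\bigoplus_{S}\mathrm{Range}(E_S)$: a vector supported on $\mathrm{Range}(E_S)$ is mapped into $\bigoplus_{M}\mathrm{Range}(E_S)$, and these target subspaces are pairwise orthogonal for distinct $S$. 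Hence $\norm{\varphi_J(\widetilde{\Gamma})}=\max_S\norm{B_S}$, where $B_S$ denotes the restriction of the map to $\mathrm{Range}(E_S)$.

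Next I would evaluate each block norm. The restriction $B_S$ acts as $v\mapsto\bigoplus_{M}c_S(M)\,v$, i.e.\ it is the coefficient vector $(c_S(M))_{M\in\mathcal{C}}$ tensored with the identity on $\mathrm{Range}(E_S)$, so $\norm{B_S}=\sqrt{\sum_{M\in\mathcal{C}}c_S(M)^2}$. The entire statement therefore reduces to the identity $c_S(M)=\alpha_S(M)-\alpha_{S\cup J}(M)$.

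Finally I would compute $c_S(M)$ by collecting, for fixed $S$, all contributions of $E_S$ in $\varphi_J(G_M)=\sum_{S''}\alpha_{S''}(M)\,\varphi_J(E_{S''})$. A term contributes to $E_S$ in only two ways: from $S''=S$ through the first branch of $\varphi_J$ (which applies exactly when $J\not\subseteq S$ and gives coefficient $+\alpha_S(M)$), and from those $S''$ with $J\subseteq S''$ and $S''\setminus J\subseteq S\subsetneq S''$ through the second branch (each giving coefficient $-\alpha_{S''}(M)$). The key combinatorial point is uniqueness: $S''\setminus J\subseteq S$ forces $S''\setminus S$ to lie in $J\setminus S$, while $J\subseteq S''$ forces $J\setminus S\subseteq S''\setminus S$, so $S''\setminus S=J\setminus S$ and hence $S''=S\cup J$; moreover this $S''$ is strictly larger than $S$ precisely when $J\not\subseteq S$. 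Thus for $J\not\subseteq S$ we get $c_S(M)=\alpha_S(M)-\alpha_{S\cup J}(M)$, and for $J\subseteq S$ both branches vanish, which again equals $\alpha_S(M)-\alpha_{S\cup J}(M)$ since then $S\cup J=S$. Substituting into $\norm{B_S}$ yields the claimed formula.

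I expect the first two steps to be essentially automatic from the orthogonality and completeness of $\{E_S\}$; the genuine content, and the only place anything can go wrong, is the last bookkeeping step. The main obstacle is the uniqueness argument, namely checking that no proper subset $S''\subsetneq S\cup J$ with $S''\supseteq S$ slips in an extra contribution to the coefficient of $E_S$, so that exactly the two terms $+\alpha_S(M)$ and $-\alpha_{S\cup J}(M)$ survive.
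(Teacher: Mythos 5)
Your proof is correct and follows essentially the same route as the paper: expand $\varphi_J(G_M)=\sum_S\beta_S(M)E_S$ with $\beta_S(M)=\alpha_S(M)-\alpha_{S\cup J}(M)$, then use the mutual orthogonality of the projectors $E_S$ to reduce the operator norm to a maximum over $S$ of the $\ell_2$-norms of the coefficient vectors (the paper does this via $\norm{A}=\sqrt{\norm{A^*A}}$, you via a direct block decomposition --- a cosmetic difference). Your explicit uniqueness argument identifying $S''=S\cup J$ as the only second-branch contributor is a correct verification of the coefficient identity that the paper asserts without proof.
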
}

\begin{proof}
We first compute $\varphi_J(G_M)$:
$$
\varphi_J(G_M)=\sum_{S\subseteq [n]}\beta_S(M) E_S,
\quad\text{where }\beta_S(M)=\alpha_S(M)-\alpha_{S\cup J}(M).
$$
Observe that $\beta_S(M)=0$ if $J\subseteq S$.
Now rewrite $(\varphi_J(\widetilde{\Gamma}))^*\varphi_J(\widetilde{\Gamma})$ as
$$
(\varphi_J(\widetilde{\Gamma}))^*\varphi_J(\widetilde{\Gamma})
=\sum_{M\in\mathcal{C}} (\varphi_J(G_M))^*\varphi_J(G_M)
= \sum_{S\subseteq[n]} \left(\sum_{M\in\mathcal{C}}\beta_S(M)^2 \right) E_S.
$$
Since the different $E_S$ project onto orthogonal subspaces, we can conclude 
$$
\norm{\varphi_J(\widetilde{\Gamma})} = \sqrt{\norm{(\varphi_J(\widetilde{\Gamma}))^*\varphi_J(\widetilde{\Gamma})}}=\max_{S\subseteq[n]} \sqrt{\sum_{M\in\mathcal{C}}\beta_S(M)^2}.
$$
\end{proof}

{We therefore have}
$$
\norm{\Gamma\circ\Delta_J}\leq\norm{\widetilde{\Gamma}\circ\tilde{\Delta}_J}\leq 2\norm{\varphi_J(\widetilde{\Gamma})}= 2\max_{S\subseteq [n]} \sqrt{\sum_{M\in{\cal C}} (\alpha_S(M)-\alpha_{S\cup J}(M))^2}.
$$
When $J$ has size at most~$p$, the right-hand side is at most~$2$ because of the constraint parallel-\eqref{eqlgcdualconstraint}, applied to edge $(S,J')\in \mathcal{E}_p$ with $J'=J\setminus S$.
Therefore
$$
\ADV^{p\parallel}(f)\geq\frac{\norm{\Gamma}}{\max_{J\in{[n]\choose p}}\norm{\Gamma\circ \Delta_J}}
\geq \frac{\sqrt{\frac{1}{2}\sum_{M\in{\cal C}} \alpha_{\emptyset}(M)^2}}{2}=\frac{1}{2\sqrt{2}}\LGC^{p\parallel}({\cal C}).
$$ 

\section{Proof of Theorem \ref{thksumparallelLB}}\label{appksumparallelLB}

Here we prove our parallel lower bound for the $\kSUM$ problem.
The proof strategy is the same as in Theorem~\ref{thEDparallelLB}.
We now use certificate structure ${\cal C}={[n]\choose k}$ with the orthogonal array 
$$
T=\{(v_1,\ldots,v_k) : \sum_{i=1}^k v_i=0 \mod q\}.
$$ 
This induces the $\kSUM$ problem in the way mentioned in Theorem~\ref{thq=p-lgc}.
We define the following solution to the dual for~$\LGC^{p\parallel}({\cal C})$:
\begin{quote}
$\displaystyle\alpha_j=\frac{1}{2n^{k/2}}\max((n/p)^{k/(k+1)} - j/p, 0)$\\[1mm]
$\displaystyle\alpha_S(M)=0$ if $M\subseteq S$\\[1mm]
$\displaystyle\alpha_S(M)=\alpha_{|S|}$ otherwise
\end{quote}
Fix some $e=(S,J)$ with $S\subseteq[n]$ of size~$s$, and disjoint $J\subseteq[n]$ of size at most $p$. 
Let $L$ denote the left-hand side of constraint parallel-\eqref{eqlgcdualconstraint}. In order to establish that the above solution is feasible, we want to show $L\leq 1$.
With respect to~$e$, we can distinguish different kinds of $M=\{i_1,\ldots,i_k\}$, depending on $i:=|M\cap S|$ and $j:=|M\cap J|$:
\begin{enumerate}
\item $i+j<k$. There are ${s\choose i}{|J|\choose j}$ such $M$, and each contributes $\leq|\alpha_s-\alpha_{s+|J|}|^2\leq 1/4n^k$ to~$L$.
\item $i+j=k$. There are ${s\choose i}{|J|\choose j}$ such $M$, each contributes $\alpha_s^2$ to~$L$ if $i<k$, and~0 if $i=k$.\\ 
Over all such choices of $i$ and $j$, at most $|J|\binom{s+|J|-1}{k-1}$ of these $M$ have $j\geq 1$ (i.e., $\alpha_S(M)\neq 0$), since this counts the number of ways of choosing one index from $J$, and $k-1$ more from $J\cup S$. 
\end{enumerate}
Note that $\alpha_s=0$ if $s\geq p(n/p)^{k/(k+1)}$, so below we may assume $s+p-1\leq 2p(n/p)^{k/(k+1)}$ for $n$ sufficiently large.
Also $\alpha_s\leq\alpha_0=(n/p)^{k/(k+1)}/2n^{k/2}$.
Hence we can bound~$L$ as
\begin{align*}
L & \leq  \sum_{i=0}^{k-1} \sum_{j=0}^{k-1-i} {s\choose i}{|J|\choose j}|\alpha_s-\alpha_{s+|J|}|^2 + \sum_{i=0}^{k-1} {s\choose i}{|J|\choose k-i}\alpha_s^2\\
& \leq  \sum_{\ell=0}^{k-1}{s+|J|\choose \ell}|\alpha_s-\alpha_{s+|J|}|^2 +  |J|{s+|J| - 1\choose k-1}\alpha_s^2\\
& \leq  \frac{n^{k-1}}{4n^k} +  \frac{p(s+p-1)^{k-1}}{(k-1)!}\frac{(n/p)^{2k/(k+1)}}{4n^k}\\
& \leq  \frac{1}{4n}+\frac{2^{k-1}}{(k-1)!}p^k(n/p)^{(k-1)k/(k+1)}\frac{(n/p)^{2k/(k+1)}}{4n^k}\\
& \leq  \frac{1}{4n}+p^k\frac{(n/p)^k}{4n^k} = \frac{1}{4n}+\frac{1}{4}\leq 1.
\end{align*}
Hence our solution is feasible. Its objective value is 
$\displaystyle\sqrt{{n \choose k} \alpha_0^2}=\Omega\left((n/p)^{k/(k+1)}\right)$.

\section{Parallel quantum oracle interrogation}\label{app:parallelinter}

The following quantum algorithm recovers the complete input $x\in\01^n$ with high probability, using roughly $n/2p$ $p$-parallel queries:
\begin{enumerate}
\item With $T=n/2+O(\sqrt{n\log(1/\eps)})$ and $B=\sum_{i=0}^{T}{n\choose i}$ being the number of $y\in\01^n$ with weight $|y|\leq T$,
set up the $n$-qubit superposition
$\frac{1}{\sqrt{B}}\sum_{y\in\01^n:|y|\leq T}\ket{y}.$
\item Apply the unitary $\ket{y}\mapsto(-1)^{x\cdot y}\ket{y}$. We can implement this using $\ceil{T/p}$ $p$-parallel queries for $|y|\leq T$: the first batch of $p$ queries would query the first $p$ positions where $y$ has a one and put the answer in the phase; the second batch queries the next $p$ positions, etc.
\item Apply a Hadamard transform to all qubits and measure.
\end{enumerate}
To see the correctness of this algorithm, note that the fraction of $n$-bit strings $y$ that have weight $>T$ is $\ll\eps$.
Hence the state obtained in step~2 is very close to the state $\frac{1}{\sqrt{2^n}}\sum_{y\in\01^n}(-1)^{x\cdot y}\ket{y}$,
whose Hadamard transform is exactly~$\ket{x}$.

\end{document}